\documentclass[conference]{IEEEtran}
\IEEEoverridecommandlockouts
\usepackage{cite}
\usepackage{amsmath,amssymb,amsfonts}
\usepackage{graphicx}
\usepackage{textcomp}
\usepackage{xcolor}
\usepackage{xspace}

\usepackage[capitalise]{cleveref}
\usepackage{url}
\usepackage{amsthm}
\usepackage[numbers,sort&compress]{natbib}
\usepackage{caption}
\usepackage[linesnumbered,ruled,vlined]{algorithm2e}

\newcommand{\system}{Reddio\xspace}
\renewcommand{\paragraph}[1]{\vskip 0.05in \noindent\textbf{#1.}}

\newtheorem{definition}{Definition}
\newtheorem{lemma}{Lemma}
\newtheorem{theorem}{Theorem}

\SetKwProg{Procedure}{Procedure}{ do}{}
\SetKwInOut{KwData}{Global}
\crefalias{algocf}{algorithm}

\setlength{\abovedisplayskip}{5pt}
\setlength{\belowdisplayskip}{5pt}
\setlength{\abovecaptionskip}{2pt}
\setlength{\belowcaptionskip}{0pt}
\setlength{\textfloatsep}{5pt}
\setlength{\floatsep}{5pt}
\setlength{\intextsep}{5pt}
\setlength{\dbltextfloatsep}{5pt}
\setlength{\dblfloatsep}{5pt}

\def\BibTeX{{\rm B\kern-.05em{\sc i\kern-.025em b}\kern-.08em
    T\kern-.1667em\lower.7ex\hbox{E}\kern-.125emX}}
\begin{document}

\title{Boosting Blockchain Throughput: Parallel EVM Execution with Asynchronous Storage for \system
}

\author{\IEEEauthorblockN{ Xiaodong Qi}
\IEEEauthorblockA{\textit{Nanyang Technological University} }
\and
\IEEEauthorblockN{Xinran Chen}
\IEEEauthorblockA{\textit{Reddio}}
\and
\IEEEauthorblockN{Asiy}
\IEEEauthorblockA{\textit{Reddio} }
\and
\IEEEauthorblockN{Neil Han}
\IEEEauthorblockA{\textit{Reddio} }
}

\maketitle

\pagestyle{plain}

\begin{abstract}
The increasing adoption of blockchain technology has led to a growing demand for higher transaction throughput. Traditional blockchain platforms, such as Ethereum, execute transactions sequentially within each block, limiting scalability. Parallel execution has been proposed to enhance performance, but existing approaches either impose strict dependency annotations, rely on conservative static analysis, or suffer from high contention due to inefficient state management. Moreover, even when transaction execution is parallelized at the upper layer, storage operations remain a bottleneck due to sequential state access and I/O amplification.
In this paper, we propose \system, a batch-based parallel transaction execution framework with asynchronous storage. \system processes transactions in parallel while addressing the storage bottleneck through three key techniques: (i) \emph{direct state reading}, which enables efficient state access without traversing the Merkle Patricia Trie (MPT); (ii) \emph{asynchronous parallel node loading}, which preloads trie nodes concurrently with execution to reduce I/O overhead; and (iii) \emph{pipelined workflow}, which decouples execution, state reading, and storage updates into overlapping phases to maximize hardware utilization.
\end{abstract}

\begin{IEEEkeywords}
parallel execution, smart contract, blockchain
\end{IEEEkeywords}
\section{Introduction}

Blockchain technology first gained prominence with Bitcoin~\cite{nakamoto2008bitcoin}, a decentralized cryptocurrency that operates without a central authority. The advent of \emph{smart contracts}~\cite{wood2014ethereum} extended blockchain applications beyond digital currencies to domains such as finance~\cite{financial}, supply chains~\cite{supplychain}, and healthcare~\cite{healthcare}. Smart contracts are self-executing programs deployed on the blockchain, facilitating trustless interactions.  
Ethereum~\cite{wood2014ethereum}, one of the most widely used smart contract platforms, sees transactions involving smart contracts account for nearly 70\% of network traffic. However, Ethereum's execution model presents a major performance bottleneck. Transactions are executed sequentially to maintain state consistency across validators, limiting throughput to around 30 transactions per second (TPS). Additionally, Ethereum employs a computationally intensive \emph{Proof-of-Work} (PoW) consensus mechanism~\cite{nakamoto2008bitcoin}, further constraining scalability. As blockchain adoption grows, overcoming these execution and consensus bottlenecks is critical for improving performance.

With advancements in consensus protocols~\cite{li2020decentralized, yu2020ohie}, blockchain performance bottlenecks are shifting from consensus mechanisms to smart contract execution. Modern protocols such as Conflux~\cite{li2020decentralized} and OHIE~\cite{yu2020ohie} achieve transaction throughputs exceeding 5,000 TPS for simple payments. However, execution remains a major limitation, particularly for smart contracts, where transactions must be processed sequentially to maintain state consistency.  A natural approach to increasing throughput is to pack more transactions into each block without altering the block generation rate. However, this exacerbates execution latency when transactions are processed sequentially. Leveraging multi-core processors for parallel execution offers a potential solution~\cite{amiri2019parblockchain,zhang2018enabling}, but determining whether transactions can be executed in parallel is non-trivial, especially when they involve smart contracts. Ensuring correctness requires \emph{deterministic serializability}, meaning that parallel execution must yield the same result as serial execution in block order. Transactions \emph{conflict} when they access the same state, requiring careful scheduling to avoid inconsistencies. A straightforward approach is to execute non-conflicting transactions in parallel while enforcing serial execution for conflicting ones, balancing performance and correctness.  

Several parallel execution frameworks have been proposed to improve transaction throughput. Some existing approaches~\cite{amiri2019parblockchain,bcos} require explicit read/write set annotations, making them impractical for general-purpose smart contracts where dependencies must be inferred dynamically. Others rely on static analysis tools~\cite{feist2019slither} to extract dependencies, but their coarse-grained analysis often results in overly conservative scheduling, thereby missing significant opportunities for parallel execution.
Another category of parallel execution solutions~\cite{2019Blockchain,sharma2019blurring} employs \emph{optimistic concurrency control} (OCC) to bypass the need for explicit dependency tracking. In OCC, transactions are executed in parallel without dependency enforcement, and after execution, a validation phase detects conflicts. If any transaction violates deterministic serializability, it is aborted and re-executed until no conflicts remain. While OCC improves computational parallelism, excessive abort-and-retry cycles can degrade performance, particularly in workloads with high contention.

Even when parallel execution techniques effectively schedule transactions at the upper layer, their overall benefits remain limited due to a fundamental bottleneck at the storage layer. As highlighted in previous studies~\cite{li2023lvmt}, disk I/O contributes to approximately 70\% of execution overhead due to \emph{I/O amplification}—the phenomenon where reading or writing a small state change necessitates multiple disk accesses. In Ethereum’s conventional execution model, state reads involve traversing the Merkle Patricia Trie (MPT)~\cite{wood2014ethereum} from root to leaf, introducing significant latency. Furthermore, after execution, state updates must be persisted back to storage, often requiring additional sequential operations to update hash values and generate a new root. Since all parallel execution models rely on the same underlying state database, storage access remains a major limiting factor. Even with efficient parallelization at the computation layer, the serialization of state access and updates at the storage layer significantly restricts the actual throughput improvements achieved by these techniques. As blockchain systems scale to millions of accounts and contracts, addressing storage inefficiencies becomes crucial to unlocking the full potential of parallel execution.

In this paper, we propose \system, a batch-based parallel transaction execution framework that effectively addresses both computational and storage bottlenecks in modern blockchain systems. Unlike prior approaches that focus primarily on transaction scheduling while keeping storage operations strictly sequential, \system optimizes execution across both the upper-level transaction processing layer and the lower-level state storage layer, significantly enhancing throughput, scalability, and overall system efficiency.   

\system adopts a \emph{batch-based execution model}, where transactions are grouped into batches and executed in parallel. This design enables efficient scheduling of independent transactions while minimizing state conflicts, ensuring better workload distribution, and reducing synchronization overhead compared to per-transaction parallel execution. However, parallelizing transaction execution alone is insufficient if storage operations remain a bottleneck. To address this, \system introduces an \emph{asynchronous state database} that decouples execution from storage operations, allowing storage updates to be processed efficiently without blocking transaction execution.  To achieve high-performance execution, \system integrates three key techniques into the storage layer:  

\begin{itemize}  
    \item \textbf{Direct state reading:} Instead of traversing the Merkle Patricia Trie (MPT) for each state access, \system allows the EVM to retrieve state values directly from a key-value database, significantly reducing read latency and minimizing I/O amplification.  
    \item \textbf{Asynchronous node retrieval:} To mitigate I/O bottlenecks, \system retrieves necessary trie nodes asynchronously in parallel with transaction execution. By leveraging concurrent database reads, it ensures that storage access does not become a performance bottleneck.  
    \item \textbf{Pipelined state management:} \system introduces a pipelined workflow where transaction execution, state retrieval, and storage updates operate in overlapping phases. This eliminates serialization bottlenecks and maximizes hardware parallelism, further enhancing throughput.  
\end{itemize}  

By integrating batch-based execution with an asynchronous state database, \system effectively maximizes parallel execution efficiency while addressing the storage limitations that traditionally hinder blockchain performance.

\paragraph{Organization}
The remainder of the paper is organized as follows.  
\Cref{sec:background} provides background information and motivates the need for parallel transaction execution in blockchain systems.  
\Cref{sec:overview} presents an overview of \system’s workflow and introduces its key design principles.  
\Cref{sec:design} details the architecture of \system and provides a formal proof of its correctness.  
\Cref{sec:analysis} analyzes the system’s performance, including correctness guarantees and recovery mechanisms.  
Finally, \Cref{sec:conclusion} summarizes our contributions and discusses future research directions.

\section{Background and Motivation}\label{sec:background}

This section provides the necessary background and key concepts required for the rest of the paper.

\subsection{Blockchain and Smart Contracts}

\paragraph{Blockchain}
A \emph{blockchain} is a shared and distributed ledger consisting of a chain of \emph{blocks}, maintained by a decentralized network of nodes. These nodes are categorized as \emph{light nodes}, \emph{full nodes}, or \emph{miners}. Light nodes store only block headers, while full nodes store and validate every block. Miners, a subset of full nodes, participate in block generation by following consensus protocols such as \emph{Proof-of-Work} (PoW)~\cite{nakamoto2008bitcoin}, \emph{Proof-of-Stake} (PoS), and \emph{Proof-of-Authority} (PoA). In this paper, we use ``full node'' and ``node'' interchangeably.

\paragraph{Smart Contracts}
A \emph{smart contract} is a self-executing computer program that enforces user-defined contractual rules on blockchains. Ethereum~\cite{wood2014ethereum} is the most widely used blockchain supporting smart contracts, written in Solidity~\cite{solidity} and compiled into bytecode for execution on the \emph{Ethereum Virtual Machine} (EVM). The EVM is a stack-based machine with a dedicated instruction set. It manages data across three memory areas: persistent storage, contract-local memory (allocated per message call), and the execution stack. \system adopts EVM as its execution environment.

\paragraph{Account Model and Contract States}
\system follows Ethereum’s account model, which includes two types of accounts: \emph{user accounts} and \emph{contract accounts}, both identified by unique 160-bit addresses. A user account holds an Ether balance, lacks executable code, and can initiate transactions. A contract account, in contrast, contains executable code and maintains a storage area, known as the \emph{contract state}. The contract state consists of key-value pairs mapping 256-bit words to 256-bit words. Collectively, the persistent storage of all accounts forms the blockchain state, which is managed by a state database.

Users interact with smart contracts by sending transactions to invoke contract functions. When a contract function is executed, its current state is retrieved from the blockchain, updated, and stored back upon completion. Additionally, \emph{Ether transactions} facilitate direct transfers of Ether between accounts without invoking EVM execution.

\begin{figure}[t]
	\setlength{\abovecaptionskip}{0.2cm}
	\setlength{\belowcaptionskip}{0cm}
	\center{\includegraphics[width=8.5cm]  {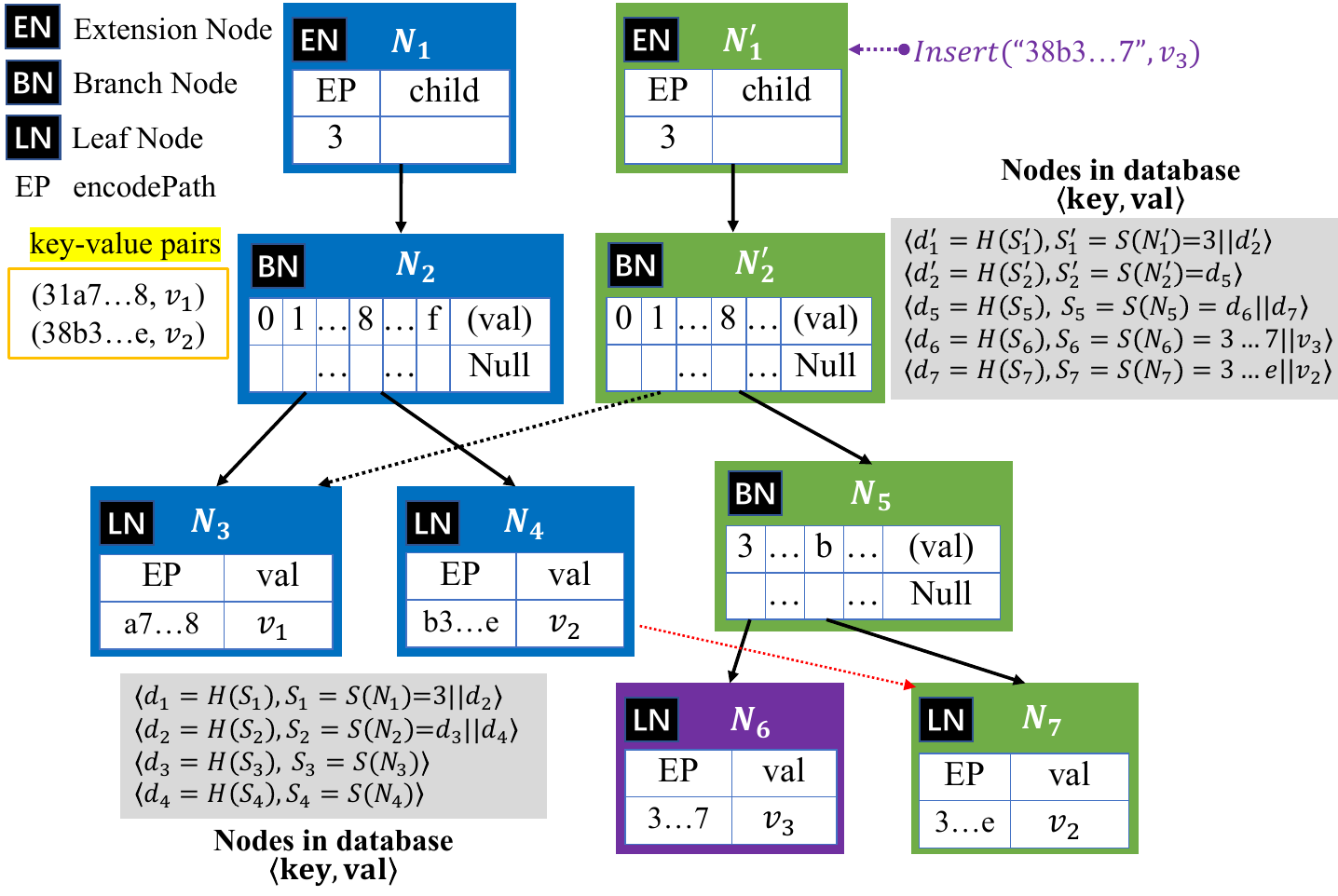}}
	\caption{\label{fig:mpt} An example of Merkle Patrica Trie (MPT).}
\end{figure}

\subsection{State Database} \label{subsec:statedb}

In a typical blockchain system, full nodes synchronize and execute all transactions within blocks, updating the ledger state, which consists of key-value pairs. A cryptographic \emph{hash} of the post-execution ledger state is included in the block header at each block height, enabling light nodes to authenticate state correctness. Additionally, state storage requires data provenance to allow historical state tracing for transaction auditing. In practice, authenticated storage in blockchains is implemented using Merkle trees~\cite{merkle1989certified} and their variants~\cite{kwon2014tendermint, hyperledger}, such as the Merkle Patricia Trie (MPT) used in Ethereum~\cite{wood2014ethereum}.

\paragraph{Merkle Patricia Trie (MPT)}
MPT is a trie with cryptographic authentication, as depicted in \cref{fig:mpt}. A state key is divided into sequential 4-bit characters, called \emph{nibbles}, which guide navigation. MPT consists of three node types: (1) \emph{Branch nodes}, with up to 17 children—16 for different nibble values and one for storing a state value when no further nibbles exist; (2) \emph{Leaf nodes}, containing a value and an \emph{encoded path} for remaining nibbles; (3) \emph{Extension nodes}, which store a shared nibble sequence (encoded path) and point to a single child (e.g., $N_1$ in \cref{fig:mpt}). State retrieval follows a root-to-leaf traversal based on key nibbles.

Each MPT node is stored separately in a key-value database~\cite{o1996log, minglani2017kinetic}, such as LevelDB. A node is uniquely identified by hashing its serialized form $S(N)$ using a cryptographic hash function $H()$~\cite{kecccak256}, and the pair $\langle id_{N}, H(S(N)) \rangle$ is stored in the database. Meanwhile, memory pointers in non-leaf nodes are replaced with corresponding identifiers.

MPT integrates authentication with indexing, using each node's hash as both its identifier and its subtree authentication hash. The root hash ensures integrity across the entire trie. When a state with key $\kappa$ is requested, a full node provides the value and a proof consisting of all traversed nodes from the root to the corresponding leaf. Clients verify state integrity by recomputing hashes up to the root and comparing the result with the block header hash.

\paragraph{State Database}  
Ethereum's state database maintains all account states using a hierarchical Merkle Patricia Trie (MPT) structure for efficient storage, verification, and updates, as illustrated in \cref{fig:account}. At the top level, the \emph{account trie} stores all blockchain accounts, with each account identified by the Keccak-256 hash of its address.  

Ethereum accounts are categorized into two types: user accounts and contract accounts. A user account holds an Ether balance, lacks executable code, and can initiate transactions, while a contract account contains executable code and maintains its own persistent storage in a separate \emph{storage trie}. The state of an account universally consists of \texttt{balance}, \texttt{codeHash}, \texttt{storageRoot}, and \texttt{nonce}. For user accounts, \texttt{codeHash} and \texttt{storageRoot} are empty, as they do not store executable code or contract state.  

The storage trie, specific to contract accounts, maps key-value pairs where keys correspond to the Keccak-256 hash of storage slot indices, and values store associated data. This hierarchical structure ensures that each contract maintains an isolated and verifiable storage space, preventing conflicts while enabling efficient state management.  

When retrieving account-related data, the state database can be accessed at two levels: 
\begin{itemize}
    \item \paragraph{Accessing account information} If only general account data (e.g., balance or nonce) is needed, the query is directly performed on the account trie.
    \item \paragraph{Accessing contract storage} If contract state variables are required, the account trie is first queried to retrieve the \texttt{storageRoot} of the corresponding contract. This \texttt{storageRoot} serves as the root hash of the contract's storage trie, which is then traversed to access specific contract storage values.
\end{itemize}
By structuring account and contract storage in a hierarchical MPT, Ethereum maintains a cryptographically authenticated state while ensuring scalability and security.  

\paragraph{Workflow}
At each block height, the state database operates in three sequential phases: \emph{update}, \emph{hash}, and \emph{store}. During the update phase, the EVM reads and writes state values to the tries in state database for transaction execution. Once all transactions are processed, the state database recalculates the new root hash by recomputing hashes for all \emph{dirty nodes} in account trie or storage tries—those modified during execution. Finally, in the store phase, all dirty nodes are persisted to a underlying key-value database (e.g., LevelDB in current Ethereum). In existing implementation, these three phases are performed sequentially for each block as shown in \cref{fig:syn_workflow}.

\begin{figure}[t]
	\setlength{\abovecaptionskip}{0.2cm}
	\setlength{\belowcaptionskip}{0cm}
	\center{\includegraphics[width=8cm]  {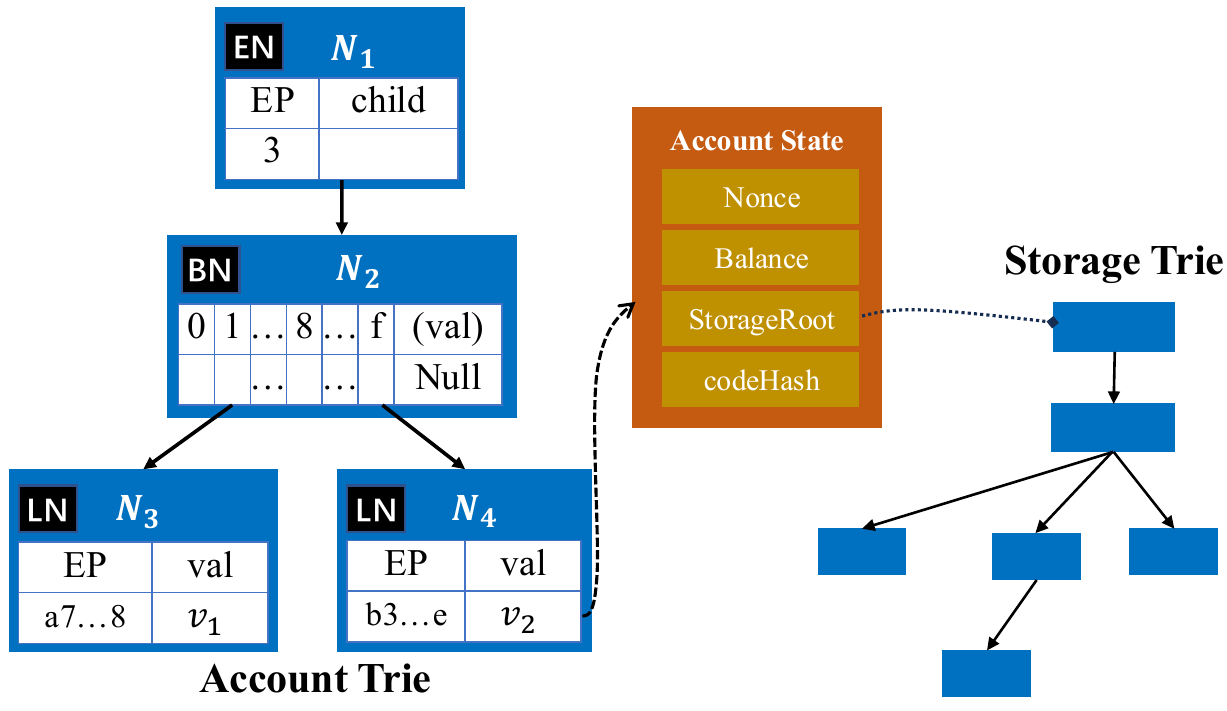}}
	\caption{\label{fig:account} An example of the relationship between the account trie and storage trie.}
\end{figure}

\subsection{Smart Contract Parallel Execution}\label{subsec:backgroud_parallel_execution}

Once a \emph{block} $B_l$, containing a sequence of transactions $\left\langle T_1, \ldots, T_m \right\rangle$ is appended successfully,  nodes execute the transactions in the block, following
the order specified in $B_l$, to update the blockchain state. This serial scheme ensures all nodes reach the consistent blockchain state after execution in current Ethereum, which is critical to the security of it. 
However, this scheme limits the throughput significantly. A direct solution is to leverage the multiple cores available to execute multiple transactions in
parallel, which is well-studied in databases \cite{eswaran1976notions,kung1981optimistic}. These protocols commonly ensure \emph{serializability}, where the effect of concurrent execution is equivalent to a serial execution in \emph{some} order.  The order, however, may vary for different executions, thus nodes, running concurrent execution independently, may enter inconsistent states. 
The parallel executions in blockchain  should additionally meet the \emph{deterministic serializability
criteria}, as defined in~\cref{def:deterministic_serializability}, which promises that all nodes
obtain the same result for every block.

\begin{definition}[Deterministic Serializability]\label{def:deterministic_serializability}
A schedule for a batch of transactions $\langle T_1, \ldots, T_m \rangle$, is \emph{deterministically
serializable} if its effect is equivalent to that of the serialized execution, which conforms to the
transactions' commitment order, $\langle T_1, \ldots, T_m \rangle$.
\end{definition}

Many recent works~\cite{amiri2019parblockchain,2019Blockchain,anjana2019efficient,zhang2018enabling} explore the design space of
parallel transaction execution for smart contracts.
On one hand, some of them assume that the accurate read/write sets of transactions are readily
available, which poses various practical challenges. For example, FISCO BCOS~\cite{bcos} requires users to specify the read/write sets explicitly to support parallelization of transactions. Such a setting is not applicable to smart contracts.
On the other hand, some works~\cite{androulaki2018hyperledger, garamvolgyi2022utilizing} employ the
\emph{Optimistic Concurrency Control} (OCC) strategy to execute transactions in parallel without read/write sets.
With OCC, all transactions read state items from a state snapshot to drive the executions without reading writes of other transactions. As a result,
all transactions can be executed in parallel. After the parallel execution, validators  abort and re-execute the transactions that violate deterministic serializability.

However, according to the reported results~\cite{garamvolgyi2022utilizing}, the speed-up achieved by existing approaches is far from linear on real-world Ethereum workload.  This is mainly due to the lack of inherent parallelism on the real-world workloads---many frequently accessed shared variables force transactions to be executed sequentially, on a few critical paths. These approaches perform  coarse-grained transaction-level concurrency controls without considering the logic of smart contracts, thus they cannot  exploit the potential parallelism by analyzing the state access patterns at the statement level. 
In this paper, we seek to develop an alternative approach that adapts to the existing Ethereum architecture, to achieve much better parallelism by reducing conflicts between transactions.

\begin{figure}[t]
   
	\setlength{\abovecaptionskip}{0.2cm}
	\setlength{\belowcaptionskip}{0cm}
    \center{\includegraphics[width=8cm]  {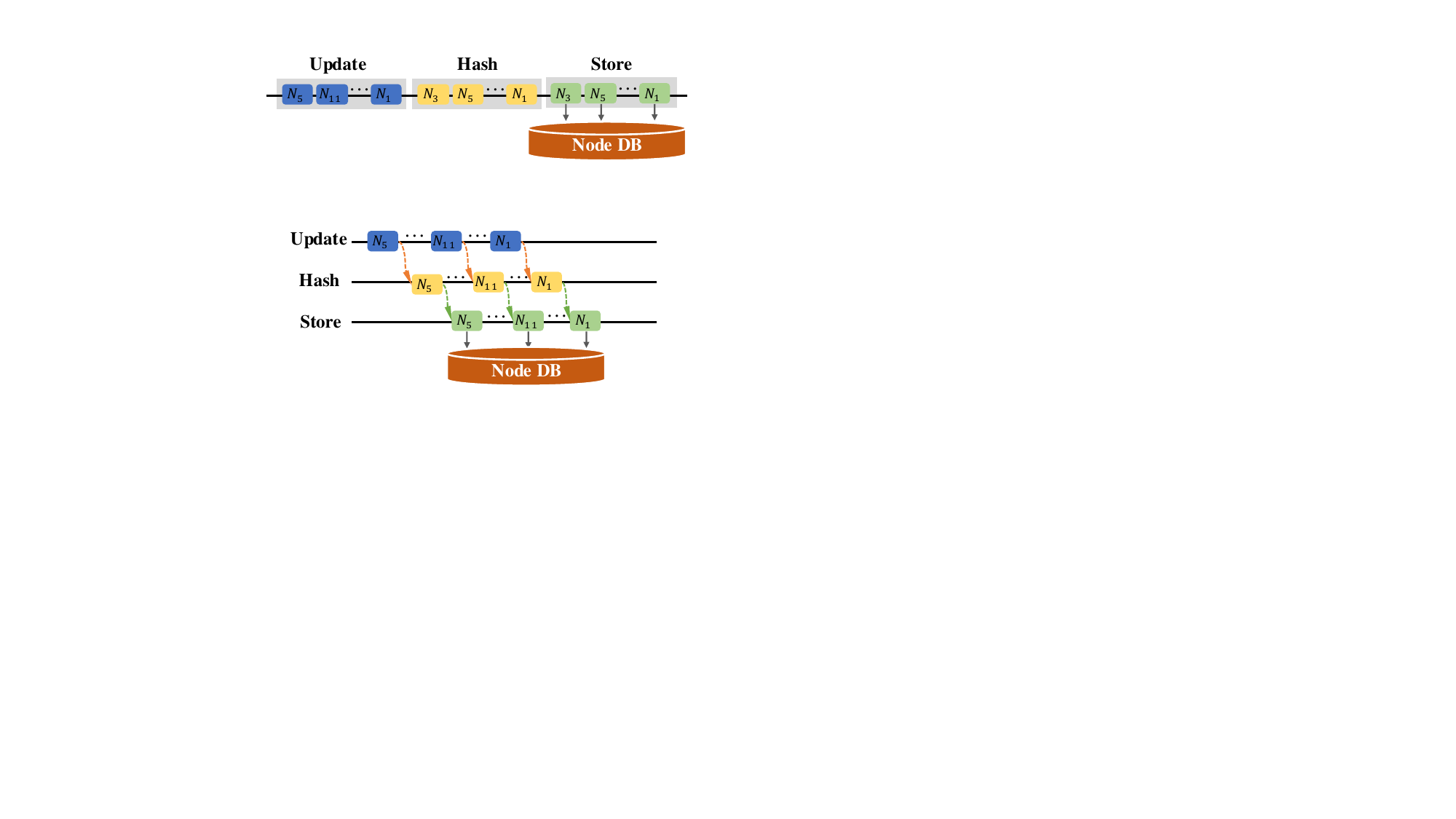}}
	\caption{ \label{fig:syn_workflow} Synchronous workflow in the Merkle Patricia Trie.}
\end{figure}

\subsection{Motivation}\label{subsec:motivation}

Ethereum employs a sequential transaction execution model, where transactions within each block are processed in a strictly ordered manner to ensure deterministic execution and network-wide consensus. However, this approach significantly limits execution efficiency, as even independent transactions without overlapping state must be processed serially. As transaction volume grows, this sequential model becomes a major bottleneck, restricting blockchain throughput and increasing transaction confirmation times.  
With advancements in consensus mechanisms, transaction execution has gradually emerged as the primary performance bottleneck in modern blockchain systems. While consensus algorithms have improved significantly, reducing block propagation delays and finalization times, the efficiency of transaction execution remains a limiting factor. Without optimizing execution, the overall system cannot fully utilize the benefits of faster consensus, ultimately constraining blockchain scalability.

Beyond the limitations of sequential execution, Ethereum's state management further exacerbates performance bottlenecks. The state database, implemented using the Merkle Patricia Trie (MPT), imposes significant computational and I/O overhead due to frequent cryptographic hashing and complex trie operations. Studies indicate that MPT-related operations account for approximately 70\% of Ethereum's transaction execution overhead, as each transaction triggers multiple trie lookups, insertions, and updates, resulting in extensive disk and memory access.  
Furthermore, the synchronous workflow of the state database constrains the potential benefits of other optimizations (\cref{fig:syn_workflow}). Even if transaction execution is parallelized, the sequential nature of state updates remains a performance bottleneck, preventing full realization of parallel execution's advantages.  

To address these challenges, optimizing both transaction execution and state management is essential for improving blockchain's scalability. A parallel execution framework that efficiently handles independent transactions while mitigating state access bottlenecks can significantly enhance blockchain performance, enabling higher throughput and lower latency without compromising consensus guarantees.
\section{Overview}\label{sec:overview}

This section provides a high-level overview of \system's parallel execution and storage optimization.

\begin{figure}[t]
	\setlength{\abovecaptionskip}{0.2cm}
	\setlength{\belowcaptionskip}{0cm}
	\center{\includegraphics[width=8cm]  {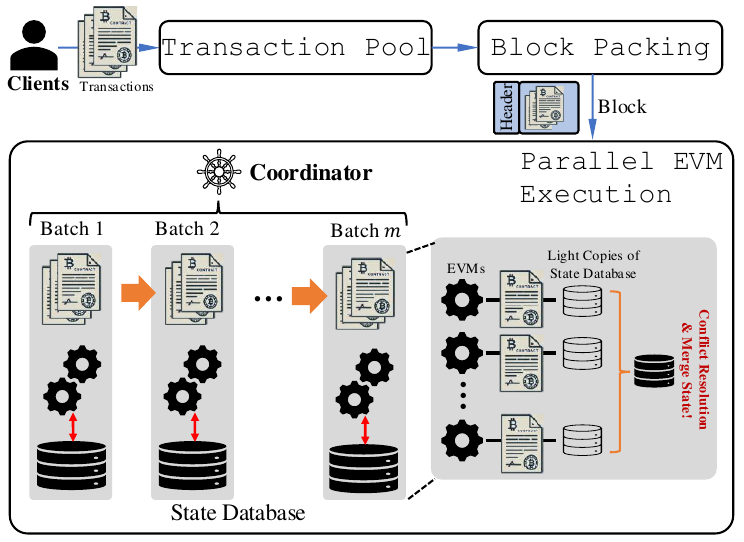}}
	\caption{\label{fig:framework} Framework of parallel EVM execution.}
\end{figure}

\paragraph{Architecture}  
\system enhances transaction throughput by adopting a batch-based parallel execution model and optimizing storage access through asynchronous state management. \cref{fig:framework} illustrates the framework of parallel EVM execution in \system. Each node processes transactions in three stages: (i) transactions are received and stored in the \emph{transaction pool}, (ii) a set of transactions is periodically selected to form a new block, and (iii) the transactions in the block are executed in parallel.  
\system is fully compatible with the EVM ecosystem, integrating the EVM as the execution environment for smart contract transactions. As a result, Ethereum smart contracts can be deployed on \system without modification. This compatibility also ensures that \system retains Ethereum’s account model and state database structure, where account data and contract states are managed as multiple Merkle Patricia Tries (MPTs) as demonstrated in \cref{fig:account}.

To overcome the performance bottlenecks in Ethereum's current transaction execution scheme, \system introduces two core techniques: \emph{parallel EVM execution} and \emph{asynchronous state database}. 
First, to coordinate parallel execution, \system employs a \emph{coordinator} that manages transaction execution batch by batch. In each batch, a set of transactions is selected and assigned to an equal number of worker threads for parallel execution. After each batch completes, the coordinator resolves dependencies among transactions and merges execution outcomes into a new global state database, ensuring deterministic serializability. This process repeats until all transactions are executed and committed.  
Second, \system utilizes an \emph{asynchronous state database} to mitigate I/O bottlenecks, as discussed in \cref{subsec:motivation}. Instead of synchronously updating the state database during execution, \system decouples execution from storage operations, allowing transactions to execute independently while state updates are applied asynchronously.

\paragraph{Parallel execution}  
\system employs an optimistic concurrency strategy to execute transactions in each batch. To facilitate transaction execution on the EVM, every transaction in a batch operates on a lightweight copy of the global state database. During execution, transactions proceed independently under the assumption that no conflicts exist.  
To maximize parallelism, each batch should ideally contain transactions with minimal or no dependencies. However, conflicts are inevitable. Therefore, after executing each batch, the coordinator detects conflicts and aborts any conflicting transactions. Let a block at height \( l \) contain a set of transactions \( B_l = \langle T_1, \ldots, T_{\alpha} \rangle \). The notion of a \emph{conflict} between two transactions is formally defined as follows.  

\begin{definition}[Transaction Conflict]\label{def:conflict}  
Given two transactions \( T_i \) and \( T_j \) (\( i < j \)), a conflict occurs if \( T_i \) writes to a state item that \( T_j \) subsequently reads.  
\end{definition}  
\noindent According to \cref{def:conflict}, if a transaction \( T_j \) conflicts with a preceding transaction \( T_i \) (\( i < j \)), it must be aborted. This is because \( T_j \) reads the state value from the snapshot rather than the updated value written by \( T_i \), violating the serializability. The coordinator then commits the remaining transactions and merges their state database copies into a new global state database, providing an updated snapshot for the next execution round. Aborted transactions are deferred to subsequent batches for re-execution. This iterative process continues until all transactions have been successfully executed.

\paragraph{Asynchronous state database}  
While parallel execution enhances transaction throughput, its benefits diminish as the number of execution threads increases. Beyond a certain threshold, I/O overhead becomes the primary bottleneck. As discussed in \cref{subsec:motivation}, read and write operations to the state database contribute over 70\% of total execution overhead due to the significant I/O amplification caused by MPTs. Consequently, pure parallel execution alone, without optimizing the underlying storage system, cannot fully exploit the potential of parallelism.  

\begin{figure}[t]
    
	\setlength{\abovecaptionskip}{0.2cm}
	\setlength{\belowcaptionskip}{0cm}
	\center{\includegraphics[width=8.5cm]  {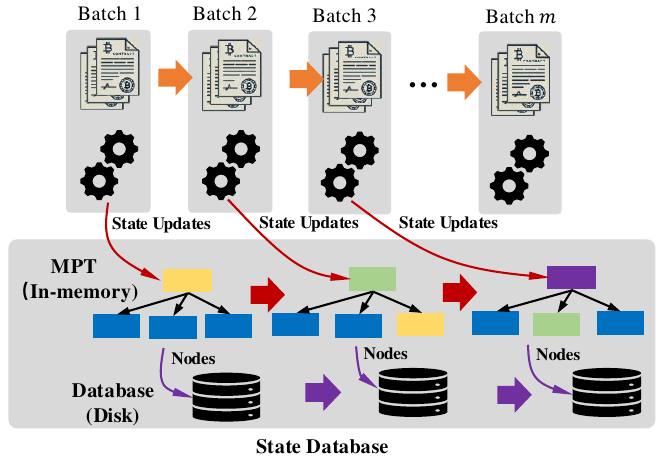}}
	\caption{\label{fig:pipeline_workflow} Asynchronous pipeline of state database in \system.}
\end{figure}

To address this limitation, \system enhances the state database by introducing \emph{direct state reading}, \emph{asynchronous node retrieval}, and a \emph{pipelined workflow}.  

First, in the current state database, when execution requests a state value, it must locate the root hash of the corresponding trie and traverse the trie to retrieve the value. This process requires loading all nodes along the path from the root to the leaf, leading to multiple rounds of I/O amplification. However, since these structures primarily serve to authenticate state integrity rather than provide direct execution semantics, the EVM does not need to interact with intermediate trie nodes. Therefore, \system enables the EVM to access requested state values directly, bypassing intermediate node retrievals.  
Second, when a state is updated, all nodes along its search path must be loaded for hash recomputation. To mitigate I/O blocking, \system records the new state value in memory and loads these nodes asynchronously.  

Third, despite these optimizations, the asynchronous workflow of the state database still imposes performance constraints, as shown in \cref{fig:syn_workflow}. In particular, the hash and store phases must be executed synchronously after the update phase, which is directly tied to transaction execution. This sequential dependency prevents throughput from scaling linearly with the number of execution threads, creating a new bottleneck.  
To address this issue, \system adopts a pipelined workflow, as illustrated in \cref{fig:pipeline_workflow}. After each batch execution, \system selectively rehashes and persists certain nodes in the account trie and storage tries to the underlying key-value database (e.g., LevelDB), forming a pipeline between execution and state persistence. However, premature hashing of frequently modified nodes may lead to redundant computation and unnecessary storage overhead. To mitigate this, \system dynamically determines an \emph{optimal point} for early hashing, ensuring that recalculations remain infrequent while maintaining storage efficiency. This pipeline design enables better resource utilization and sustains high transaction throughput.  

\section{Design}\label{sec:design}

\subsection{Batch-based Optimistic Parallel Execution} \label{subsec:parallel_exec}

\begin{algorithm}[t]
	\small
	\setlength{\belowdisplayskip}{1pt}
	\caption{Transaction batch fetching \label{algorithm:batch_fetch}}
    \KwIn{Transaction set $\mathcal{T}$ in block $B_l$, thread number $\eta$}
    \KwOut{A batch of  transactions $\mathcal{T}_{batch}$}
    $\mathcal{T}_{batch}\gets\emptyset$\\
    \For{$i\gets0;i<\eta; i\gets i+1$}{
        $T_{p_i}\gets \texttt{NextTx}(\mathcal{T}, \mathcal{T}_{batch})$\\
        \If{\rm $T_{p_i}\neq \texttt{null}$}{
            $\mathcal{T}_{batch}\gets \mathcal{T}_{batch}\cup\{T_{p_i}\}$\\
            $\mathcal{T}\gets \mathcal{T}\setminus\{T_{p_i}\}$\\
        }\Else{
            \textbf{break}
        }  
    }
    \If{\rm $len(\mathcal{T}_{batch})<\eta$}{
        $\mathcal{T}_{batch}\gets \texttt{fill\_Txs}(\mathcal{T}_{batch})$
    }
    \Return{$\mathcal{T}_r$}\\
    
    \vspace{5pt}
    \Procedure{\rm $\texttt{NextTx}(\mathcal{T},\mathcal{T}_{batch})$}{
        $T_{next}\gets \texttt{null}$\\
        \For{$i\gets 0; i<len(\mathcal{T});i\gets i+1$ }{
            $T_{p_i}\gets \mathcal{T}[i]$\\
            $r\gets true$\\
            \ForEach{$T_{p_j}\in \mathcal{T}_{batch}$}{
                \If{\rm $\texttt{explicit\_conflict}(T_{p_i}, T_{p_j})$ is true}{
                    $r\gets false$
                }
            }
            \If{\rm $r$ is \textbf{true}}{
                $T_{next}\gets T_{p_i}$\\
                \textbf{break}
            }
        }

        \Return{$T_{next}$}
    }
\end{algorithm}

As previously discussed, \system adopts a batch-based optimistic parallel execution scheme, where each transaction executes independently on a copy of the global state database without any cross-thread communication. At the beginning of each batch, a set of transactions is selected for execution in the next round.

\begin{algorithm}[t]
    \small
    \setlength{\belowdisplayskip}{1pt}
    \caption{Parallel execution of transaction batch \label{algorithm:batch_parallel_exec}}
    \KwIn{ Transaction batch $\mathcal{T}_{batch}$, global state database $S$, set of remained transactions $\mathcal{T}$, set of reads of executed transactions $\mathcal{R}$, set of writes of executed transactions $\mathcal{W}$, index of next transaction to be committed $I_{next}$ }
    \KwOut{Global state $S$ after executing transactions in $\mathcal{T}_{batch}$, set of state updates written by successfully committed transactions $W$, set of reads of executed transactions $\mathcal{R}$, set of writes of executed transactions $\mathcal{W}$, index of next transaction to be committed $I_{next}$  }
    \For{\rm $i\gets 0; i<len(\mathcal{T}_{batch});i\gets i+1$}{ 
        \tcp{Execute each transaction on a separate thread}
        $T_{p_i}\gets \mathcal{T}_{batch}[i]$\\
        $S_{p_i}\gets \texttt{light\_Copy}(S_{p_i})$\\
        $R_{p_i}, W_{p_i}\gets \texttt{parallel\_Execute}(T_{p_i}, S_{p_i})$\\

        $\mathcal{R}\gets \texttt{append}(\mathcal{R}, R_{p_i})$, $\mathcal{W}\gets \texttt{append}(\mathcal{W},W_{p_i})$\\
    }
    \tcp{Wait for all threads to terminate}
    $\texttt{wait()}$\\
    $\texttt{sort}(\mathcal{R})$, $\texttt{sort}(\mathcal{W})$\\
    
    \tcp{Merge states produced by all threads}
    
    $W\gets \emptyset$\\
    \For{$i\gets 0; i<len(\mathcal{R}); i\gets i+1$}{
        \If{\rm $R_{p_i}$ overlaps with $W_{p_0}, \ldots, W_{p_{i-1}}$}{
            $\texttt{abort}(\mathcal{T}_{batch}[i])$\\
            $\mathcal{T}\gets \mathcal{T}\cup \{\mathcal{T}_{batch}[i]\}$\\
        } \ElseIf{$p_i=I_{next}$}{
            $S_{merge}\gets \texttt{merge\_State}(S_{merge}, W_{p_i})$\\
            $W\gets W\cup W_{p_i}$\\
            $\mathcal{R}\gets \mathcal{R}\setminus \{R_{p_i}\}$, $\mathcal{W}\gets \mathcal{W}\setminus \{W_{p_i}\}$\\
            $I_{next}\gets I_{next}+1$
        }
    }
    $S\gets S_{merge}$\\
    \Return{$S,W, \mathcal{R}, \mathcal{W}, I_{next}$}
\end{algorithm}

\paragraph{Transaction batch fetching}  
The coordinator is responsible for constructing each batch, as presented in \cref{algorithm:batch_fetch}. To form a batch, the coordinator first identifies the accounts accessed by each transaction. Let \( \mathcal{T} \) denote the set of remaining transactions, which initially includes all transactions in the current block, and let \( \eta \) represent the number of available execution threads.  
A straightforward approach is to select the \( \eta \) transactions in \( \mathcal{T} \) with the smallest indexes, ensuring deterministic serializability. However, this method may introduce excessive conflicts within a batch, limiting parallelism. Instead, the coordinator in \system carefully selects transactions with fewer conflicts to maximize parallel execution efficiency.

The process starts by inserting the first transaction from \( \mathcal{T} \) into \( \mathcal{T}_{batch} \). Then, for each subsequent transaction \( T_{p_i} \), \system checks whether it explicitly conflicts with any transactions already in \( \mathcal{T}_{batch} \) (Lines 1-8). To maximize concurrency, transactions with explicit conflicts should be avoided during the fetching phase. Recall that each transaction \( T_{p_i} \) has a sender account \( T_{p_i}.s \) and a receiver account \( T_{p_i}.r \), both of which must be accessed during execution. To reduce conflicts, the coordinator ensures that transactions added to \( \mathcal{T}_{batch} \) do not explicitly overlap in sender or receiver accounts (Lines 15-22).  
Specifically, a transaction \( T_{p_i} \) is considered to have an explicit conflict with an existing transaction \( T_{p_j} \) in \( \mathcal{T}_{batch} \) if any of the following conditions hold:  
\begin{enumerate}  
    \item \( T_{p_i}.s = T_{p_j}.s \) or \( T_{p_i}.s = T_{p_j}.r \), meaning \( T_{p_i} \)'s sender overlaps with \( T_{p_j} \)'s sender or receiver.  
    \item \( T_{p_i}.r = T_{p_j}.s \) or \( T_{p_i}.r = T_{p_j}.r \), meaning \( T_{p_i} \)'s receiver overlaps with \( T_{p_j} \)'s sender or receiver.  
\end{enumerate}  
If such a conflict is detected, \( T_{p_i} \) is skipped in this batch selection to avoid foreseeable conflicts (Lines 18-19). If no significant conflicts are found, the transaction is added to the batch (Lines 20-22). This selection process continues until the batch reaches size \( \eta \).  
If the final batch size remains smaller than \( \eta \), the coordinator forcibly fills the batch with the remaining transactions having the smallest indexes from \( \mathcal{T} \), regardless of potential conflicts. This ensures that all available execution threads are utilized, even at the cost of some potential rollbacks (Lines 9-10).

\paragraph{Parallel execution} Once obtains the batch $\mathcal{T}_{batch}$ of transactions, the coordinator schedules the parallel execution of them, as illustrated in \cref{algorithm:batch_parallel_exec}. It takes the transaction batch $\mathcal{T}_{batch}$, a global state database $S$, and the set of remained transactions $\mathcal{T}$, the read/write sets $\mathcal{R}$/$\mathcal{W}$ of transactions that have been executed, and the index $I_{next}$ of next transaction to be committed as input and returns a new global state database after executing transactions, and a set of state updates $W$ written by successfully committed transactions. 

The coordinator begins by assigning each transaction \( T \in \mathcal{T}_{batch} \) to a worker thread, initiating an EVM instance for execution (Lines 1-5). Each worker thread executes its assigned transaction in parallel on a separate copy of the state database \( S_{p_i} \) (Line 3). This approach ensures that transactions run independently on different threads without direct interactions.  
During execution, each thread records the read set \( R_{p_i} \) and write set \( W_{p_i} \) accessed by its transaction (Line 4), which are appended to $\mathcal{R}$ and $\mathcal{W}$ respectively. Although \( \mathcal{T}_{batch} \) is carefully selected to avoid explicit conflicts, implicit conflicts may still arise. For example, multiple transactions may invoke the same smart contract internally, leading to unintended dependencies not immediately evident from their senders and receivers. Such conflicts require additional coordination to maintain correctness. To cope with this issue, the read/write sets in $\mathcal{R}$/$\mathcal{W}$ are sorted by the transaction index in a ascending order (Line 7). 
If the read set \( R_{p_i} \) overlaps with any write set \( W_{p_0}, \ldots, W_{p_{i-1}} \), then \( \mathcal{T}_{batch}[i] \) has read stale data, missing updates written by preceding transactions. In this case, the coordinator aborts the transaction and returns it to \( \mathcal{T} \) for future re-execution (Lines 10-12). Otherwise, if the transaction index $p_i$ equals $I_{next}$, the transaction is committed, and the coordinator merges its state updates into the new state database \( S_{merge} \) (Lines 13-17). Finally, the updated state database and write set are returned.  

\begin{algorithm}[t]
    \small
    \setlength{\belowdisplayskip}{1pt}
    \caption{State database operation \label{algorithm:statedb_op}}
    \KwData{State database $S$, memory state cache $\mathcal{C}_{state}$, task queue $Q_{ret}$}
    \Procedure{\rm $\texttt{Get}(A, \kappa)$}{
        $v\gets \mathcal{C}_{state}.\texttt{get}(A||\kappa)$\\
        \If{\rm $v= \texttt{null}$}{
            $D_{direct}\gets$ direct database in $S$ for direct state reading\\
            $v\gets \texttt{direct\_Get}(D_{direct}, A||\kappa)$\\
            $\mathcal{C}_{state}.\texttt{set}(A||\kappa, v)$
        }
        \Return{$v$}
    }
    \Procedure{\rm $\texttt{Set}(A, \kappa, v)$}{
        $\mathcal{C}_{state}.\texttt{set}(A||\kappa, v)$\\
        $o\gets \langle A, \kappa, v\rangle$\\
        $Q_{ret}.\texttt{Push}(o)$
    }
\end{algorithm}

\subsection{Asynchronous State Database} 
While parallel execution improves computation, storage access remains a major bottleneck due to its inherently sequential nature. In \system, although each worker thread operates on a lightweight copy of the state database, they share the same underlying key-value store, leading to sequential read dependencies across threads. As the state database scales to millions of accounts, execution is no longer the primary limiting factor—storage latency dominates, restricting overall speedup.  
To mitigate I/O bottlenecks, \system employs an asynchronous workflow that decouples execution from storage. However, this approach introduces new challenges, including stale data access and unpredictable write bursts that can overwhelm the storage backend. To address these issues, \system strategically schedules state commits, balancing execution efficiency and storage consistency.  To this end, \system introduces three key optimizations: \emph{direct state reading}, \emph{asynchronous node retrieval}, and \emph{pipelined workflow}, as detailed later in this subsection. These techniques collectively mitigate sequential state access bottlenecks, improving execution efficiency and overall system throughput.

\paragraph{Direct state reading}  
During transaction execution, the EVM does not require state proofs, yet conventional state databases still traverse all nodes from the root to the leaf for each read operation, whether in the account trie or storage trie, as described in \cref{subsec:statedb}. This process incurs significant latency, as each access involves multiple disk I/O operations, slowing EVM execution.  

To reduce this overhead, \system enables the EVM to retrieve state values directly from a separate key-value database \( D_{direct} \) with a single read operation. For an account with address \( A \) and state value \( \mathcal{V} \), the state database maintains an additional record \( \langle A, \mathcal{V} \rangle \) in \( D_{direct} \). Similarly, for each state entry \( \kappa \) of a smart contract account \( A \) with value \( v \), the state database stores \( \langle A || \kappa, v \rangle \), where ``$||$'' denotes concatenation. As illustrated in \cref{fig:direct_read_asyn_node_ret}, when execution needs to read the value of account ``38...e'', it directly retrieves the value from \( D_{direct} \), bypassing the account trie.

\begin{figure}[t]
	\setlength{\abovecaptionskip}{0.2cm}
	\setlength{\belowcaptionskip}{0cm}
	\center{\includegraphics[width=8.5cm]  {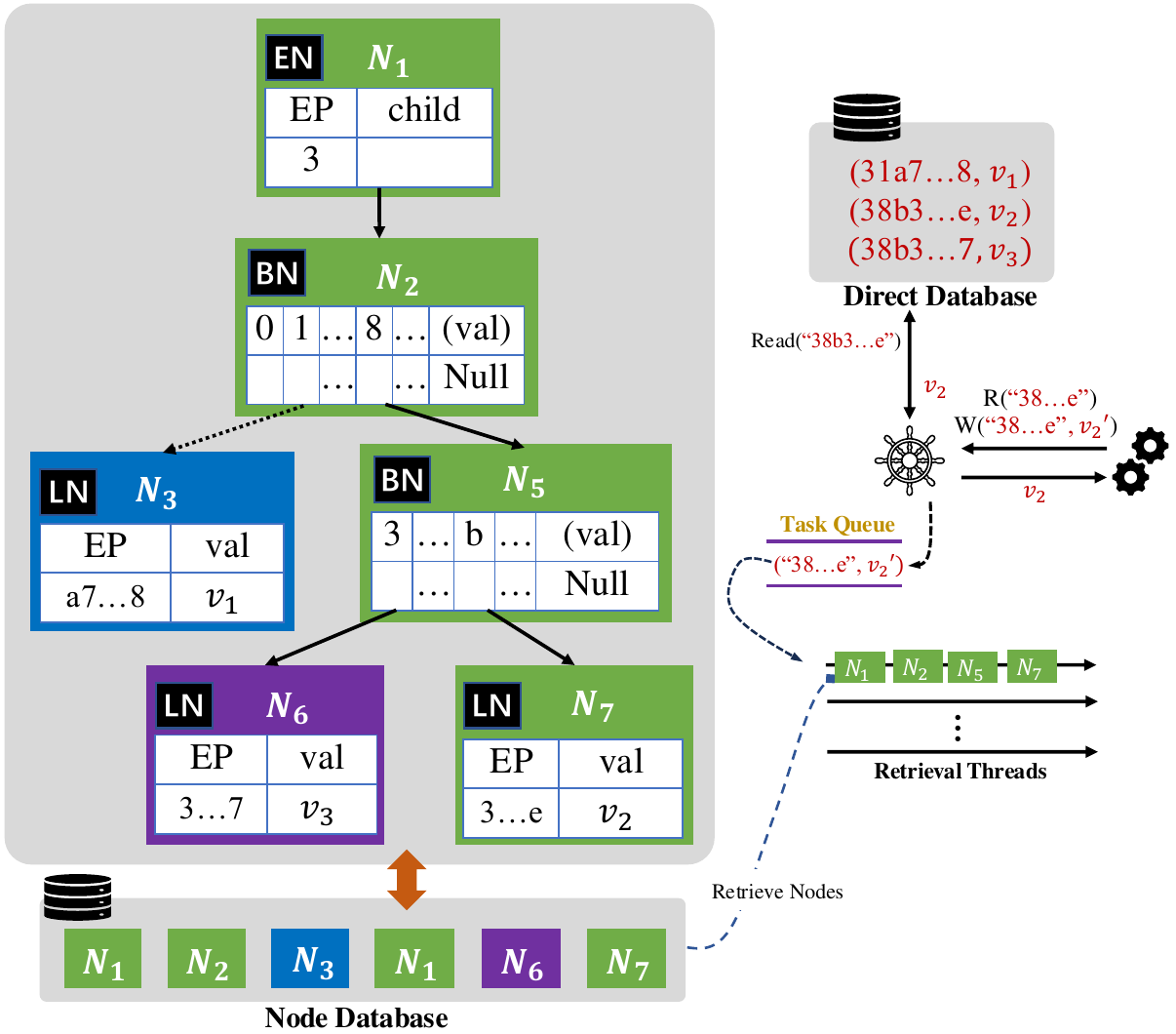}}
	\caption{\label{fig:direct_read_asyn_node_ret} Example of direct state reading and asynchronous node retrieval in \system.}
\end{figure}

The \texttt{Get()} function in \cref{algorithm:statedb_op} details this process. Three global data structures are shared across algorithms in this paper: the state database \( S \), the memory state cache \( \mathcal{C}_{state} \), and a queue \( Q_{ret} \). The queue \( Q_{ret} \) is used for asynchronous node retrieval, which will be detailed next.  
During execution, transaction updates are first written to the state cache \( \mathcal{C}_{state} \) before being persisted at an appropriate time. The function \texttt{Get()} takes an account address \( A \) and a state key \( \kappa \) as input. If \( \kappa \) is empty, the request targets the storage of an account; otherwise, it retrieves the value of state \( \kappa \) within contract \( A \). If the requested state is not found in \( \mathcal{C}_{state} \), \texttt{Get()} retrieves the value directly from \( D_{direct} \), bypassing both the account and storage tries (Lines 3-6).

While direct state reading improves retrieval efficiency by bypassing MPT traversals, it introduces the challenge of maintaining consistency between the fast-access state database and the MPT structure, especially during state updates or recovery from unexpected crashes. The solution to this issue is detailed in \cref{subsec:recovery}.

\paragraph{Asynchronous node retrieval}  
While direct state reading optimizes read operations, write operations still require modifying the state, triggering node modifications and rehash calculations along the search path. Retrieving these nodes only after a write operation introduces significant I/O overhead, blocking execution.  
To address this issue, \system adopts an asynchronous approach where state updates are first buffered in \( \mathcal{C}_{state} \), while dedicated threads load the required nodes asynchronously in parallel with execution. To further accelerate this process, \system leverages the concurrent reading capabilities of key-value databases, such as LevelDB. Specifically, the state database initializes \( \zeta \) dedicated load threads to retrieve nodes concurrently.  

As illustrated in \cref{fig:direct_read_asyn_node_ret}, when execution updates a state value \( v_2' \) for account ``38...e'', the new value is first cached in memory, and a retrieval task is pushed into the queue \( Q_{ret} \). The retrieval threads then continuously dequeue tasks from \( Q_{ret} \) and concurrently fetch the corresponding MPT nodes from the underlying node database. The \texttt{Set()} function in \cref{algorithm:statedb_op} formalizes this write operation. When a write occurs, the updated value is first cached in \( \mathcal{C}_{state} \) (Line 9). Then, the write operation, represented as \( o = \langle A || \kappa, v \rangle \), is inserted into the task queue \( Q_{ret} \) (Lines 10-11). If \( \kappa \) is empty, \( o \) updates the account storage of \( A \); otherwise, it modifies the contract state at key \( \kappa \).

The main logic of asynchronous node retrieval is presented in \cref{algorithm:asyn_node_retrieval}. A set of \( \zeta_{r} \) retrieval threads continuously fetch tasks from \( Q_{ret} \) and retrieve the corresponding nodes along the search paths (Lines 1-9). Each thread begins by dequeuing a task \( o = \langle A, \kappa, v \rangle \) from \( Q_{ret} \), then accesses the account trie in \( S \) and loads the necessary nodes to identify account \( A \) using the \texttt{load\_Nodes()} function. This function loads nodes from an MPT (either the account trie or a storage trie) stored in the underlying key-value database.  
If \( \kappa \) is null, \( o \) updates only the state storage of account \( A \). Otherwise, since the write operation modifies contract state, the thread must also load nodes from \( A \)'s storage trie. The \texttt{load\_Nodes()} function retrieves nodes along the search path of the specified trie and caches them in the node cache \( \mathcal{C}_{node} \) (Lines 10-21). If a requested node is not found in the cache, it is fetched from the node database \( D_{node} \) (Lines 14-16).  
Since the underlying key-value database supports concurrent reads, multiple retrieval threads can significantly improve performance by parallelizing state access.

Since the sender and receiver accounts of a transaction are known in advance, their corresponding entries in the account trie can be preloaded before execution, reducing lookup latency. In contrast, for smart contract invocations, the specific storage locations accessed depend on runtime execution logic and cannot be determined statically. As a result, contract storage nodes must be loaded dynamically during execution.

\begin{algorithm}[t]
    \small
    \setlength{\belowdisplayskip}{1pt}
    \caption{Asynchronous node retrieval \label{algorithm:asyn_node_retrieval}}
    \KwData{State database $S$, memory node cache $\mathcal{C}_{node}$, retrieval task queue $Q_{ret}$}

    \tcp{Run continuously on a separate thread}
    \While{\rm true}{ \label{alg:asyn_node_ret_start}
        $o=\langle A, \kappa, v\rangle \gets Q_{ret}.\texttt{Pop()}$\\

    $D_{node}\gets$ node database in $S$ for nodes\\
    $Trie_{acc}\gets$ account trie in $S$\\
    $\mathcal{V}\gets \texttt{load\_Nodes}(Trie_{acc}, A, D_{node}, \mathcal{C}_{node})$\\
    \If{\rm $\kappa=\texttt{null}$}{
        \Return{}
    }
    $Trie_{storage}\gets storage\_Trie(\mathcal{V}, A)$
    \\
    $\texttt{load\_Nodes}(Trie_{storage}, \kappa, D_{node}, \mathcal{C}_{node})$ \label{alg:asyn_node_ret_end}\\
    }
    \vspace{5pt}
    \Procedure{\rm $\texttt{load\_Nodes}(Trie, key, D_{node}, \mathcal{C}_{node})$}{
        $next\gets$ the root hash of MPT $Trie$\\
        $node\gets \texttt{null}$\\
        \While{\rm $next\ne \texttt{null}$}{
            \If{\rm node $next$ is not in $\mathcal{C}_{node}$}{
                $node\gets \texttt{node\_Retrieve}(D_{node}, next)$\\
                $\texttt{set\_Node}(\mathcal{C}_{node}, next, node)$\\
            }\Else{
                $node\gets \texttt{get\_Node}(\mathcal{C}_{node}, next)$\\
            }
            \If{\rm $node$ is the leaf}{
                \Return{\rm the value in $node$ }
            }
            $next\gets \texttt{next\_Node}(node, A)$\\
        }
    }
\end{algorithm}

\begin{algorithm}[t]
    \small
    \setlength{\belowdisplayskip}{1pt}
    \caption{Pipelined workflow in state database \label{algorithm:pipeline_workflow}}
    \KwData{task queue for hash $Q_{hash}$, task queue for store $Q_{store}$, global state database $S$}

    \Procedure{\rm \texttt{AsyCommitAccount$()$}}{
        $\mathcal{N}\gets$ nodes in the account trie that have reached commit points without dirty children\\
        \For{$N$ {\rm \textbf{in}} $\mathcal{N}$}{
            $Q_{hash}.\texttt{Push}(N)$
        }
    }

   \Procedure{\rm \texttt{HashThread$()$} }{ 
        \For{\rm true} {
            $N\gets Q_{hash}.\texttt{Pop}()$  \\
            \For{\rm $\texttt{can\_Hash}(N)$} {  
                    \If{\rm $d\gets 
                    \texttt{Hash}(N)$ is not $null$}{ 
                        $\texttt{set\_Hash}(N,d)$\\
                        $Q_{store}.\texttt{Push}(N)$ \\
                    } \Else { 
                        \textbf{break} 
                    }
                $N\gets$ parent node of $N$ \\  
            }
        }
   }

   \Procedure{\rm \texttt{StoreThread$()$}}{ 
        $D_{node}\gets$ node database in $S$ for nodes\\
        \For{\rm true}{
            $N\gets Q_{store}.\texttt{Pop}()$ \\
            \If{\rm $ctx\gets \texttt{Ser}(N)$ is not $null$}{
                $\texttt{store\_Node}(D_{node}, ctx, N)$\\
                }
        }
   }
\end{algorithm}

\paragraph{Pipelined workflow}  
As discussed in \cref{subsec:motivation}, the synchronous workflow of the state database significantly limits the benefits of parallel execution. To address this issue, \system introduces a pipelined workflow that overlaps state updates with transaction execution to improve efficiency.  
The state update process consists of three sequential phases: update, hash, and store. In the conventional workflow, these phases are executed serially, meaning that each phase must fully complete before the next phase begins. When a state is modified, all nodes along its search path must be updated, rehashed, and then persisted in the key-value database. Since hashing depends on updated node values and storage depends on finalized hashes, these dependencies force a strict execution order, preventing concurrent processing and introducing performance bottlenecks.  

\system optimizes the workflow by overlapping hashing and storage with transaction execution. Instead of waiting until execution completes, modified nodes are rehashed and persisted asynchronously, reducing the delay between state updates and final storage. However, if a node is modified multiple times by later transactions, premature rehashing may lead to redundant computation.  \system employs different asynchronous rehashing strategies based on trie type, ensuring efficient state updates while minimizing unnecessary rehashing.   

\begin{itemize}  
    \item \paragraph{Storage trie} If a smart contract account is unlikely to be accessed again within the block, meaning it will not be explicitly accessed by the remaining transactions, \system rehashes its storage trie and persists the modified nodes.  
    \item \paragraph{Account trie} Once a storage trie is rehashed, its corresponding leaf node in the account trie is updated, prompting \system to rehash the account trie at the node level. For each node in the account trie, \system estimates a commit point---after which the node is unlikely to be modified---allowing for early rehashing.  
\end{itemize}   
\noindent  \system differentiates between the two by adopting distinct asynchronous processing strategies tailored to their access patterns.  
Storage tries are processed asynchronously at the account level because they exhibit lower contention and their access patterns can be explicitly predicted based on transaction execution. If a contract account is not expected to be accessed again within the block, its storage trie can be speculatively rehashed and persisted.  
In contrast, the account trie is frequently accessed and modified by multiple transactions. To reduce redundant computations from concurrent modifications, \system employs a finer-grained asynchronous pipeline for the account trie, determining a commit point for each node individually. By delaying rehashing until a node is unlikely to be modified again, this approach improves efficiency while maintaining consistency.

After each batch execution, the coordinator checks whether a contract account \( A \) will be explicitly accessed by any remaining transactions. If not, \system speculatively rehashes and persists the Merkle Patricia Trie (MPT) of \( A \)'s storage into the node database. Although an internal transaction call could still access \( A \) implicitly, potentially causing unnecessary rehashing, such cases are rare in practice and are considered an acceptable trade-off.  
The following subsection details this process, while this section focuses on the pipelined workflow for the account trie.   

\begin{figure}[t]
	\setlength{\abovecaptionskip}{0.2cm}
	\setlength{\belowcaptionskip}{0cm}
    \center{\includegraphics[width=6.5cm]  {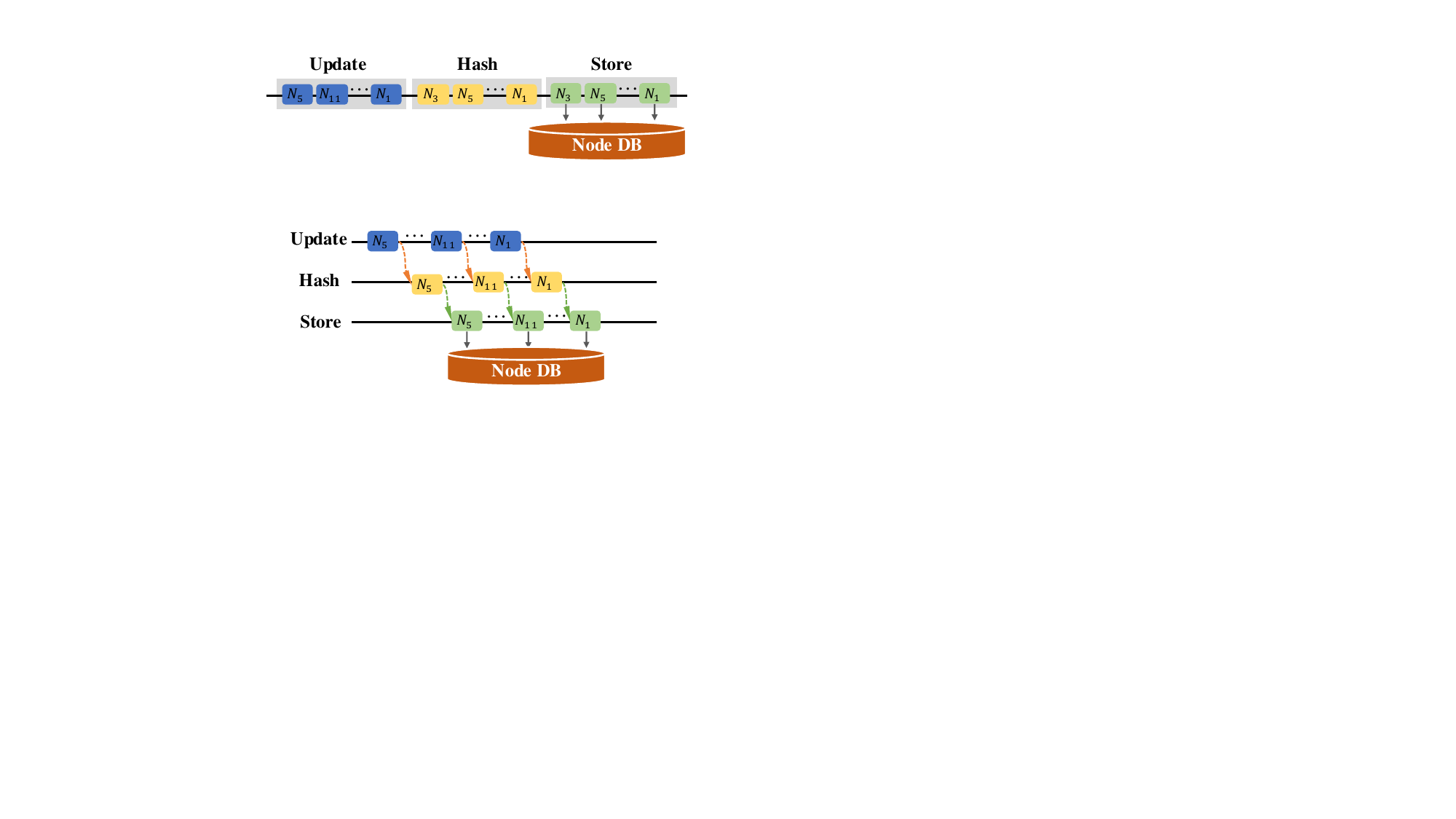}}
	\caption{\label{fig:pipeline_account_trie}Pipelined workflow for asynchronous state database.}
\end{figure}

\system employs two additional threads to handle the hash and store phases in parallel, while the update phase is triggered by the commit thread, as described in \cref{algorithm:framework}. Ideally, when a node in the account trie completes one phase, it is immediately passed to the next, enabling concurrent processing across phases. However, nodes—especially those closer to the root—may undergo multiple modifications, preventing immediate commitment to the hash phase. To mitigate this, \system estimates a \textit{commit point} for each node, representing a threshold beyond which further modifications are unlikely (e.g., less than 10\%). Once this commit point is reached, \system speculatively commits the node for hashing.  

The pipelined workflow is outlined in \cref{algorithm:pipeline_workflow}. At the end of each batch execution, the coordinator invokes $\texttt{AsynCommitAccount}()$, which pushes all nodes in the account trie that have reached their commit points into a queue \( Q_{hash} \) (Lines 1-4). The $\texttt{HashThread}()$ continuously dequeues nodes from \( Q_h \), recalculates their hashes, and, once a node \( N \) is hashed, attempts to hash its parent if all its child nodes have already been processed (Lines 5-14). The hashed nodes are then pushed into queue \( Q_{store} \) for the subsequent storage phase. Finally, the $\texttt{StoreThread}()$ serializes the nodes dequeued from \( Q_{store} \) and writes them into the node database \( D_{node} \) (Lines 15-19).  Moreover, the determination of each node's commit point is discussed in \cref{subsec:commit_point}.

In the asynchronous workflow of the account trie, nodes are accessed concurrently by the update, hash, and store phases as shown in \cref{fig:pipeline_account_trie}. If a node is being hashed while another modification occurs, the hash phase may process an outdated or intermediate state. To prevent this, \system assigns a lock to each node, ensuring exclusive access during processing. Additionally, before a thread begins hashing or persisting a node, it checks whether the node has been modified since its last update. If further modifications have occurred, the hash or store phase is aborted to avoid processing stale data. For simplicity, this mechanism is not explicitly presented in the algorithm.

\subsection{Framework}  

In this subsection, we present the overall framework of \system's parallel EVM execution with asynchronous storage, as illustrated in \cref{algorithm:framework}. The algorithm is managed by the main thread (coordinator). First, the coordinator initializes the required threads, 
$\zeta_{r}$ retrieve threads to retrieve nodes, a hash thread, and a store thread (Lines 1-4).  The thread $\texttt{AsyCommitStorage}()$ is responsible for committing the storage tries of smart contracts. Initially, the index of the next transaction to be committed is 0, and the set of read or write sets generated by transactions that have been executed is also set to empty (Line 5). 
The main loop iterates batch by batch until all transactions have been processed (Lines 7-17).  
In each iteration, the coordinator fetches a batch of transactions, assigns \( \eta \) worker threads for concurrent execution, and collects the resulting updates \( \overline{W} \), which are generated by committed transactions (Lines 8-10). For each account \( A \) recorded in \( \overline{W} \), if \( A \) will not be accessed explicitly, the coordinator flushes all updates to \( A \)'s storage trie by pushing it into \( Q_{commit} \) for asynchronous commitment. The \( \zeta_c \) commit threads continuously acquire accounts from \( Q_{commit} \) and apply updates to storage tries asynchronously (Lines 23-30). Afterward, the coordinator invokes \texttt{AsynCommitAccount}() to trigger the pipelined workflow of the account trie as defined in \cref{algorithm:pipeline_workflow}. Finally, it waits for all threads to complete the commit workflow before returning the updated state database.

\begin{algorithm}[t]
    \small
    \setlength{\belowdisplayskip}{1pt}
    \caption{Framework of parallel transaction execution based on asynchronous storage \label{algorithm:framework}}
    \KwData{transaction set $\mathcal{T}$,  memory node cache $\mathcal{C}_{node}$, memory state cache $\mathcal{C}_{state}$, no. of worker threads $\eta$, no. of retrieve threads $\zeta_r$, no. of commit threads $\zeta_c$, }
    \KwIn{State database $S$ }
    \KwOut{New state database $S$}

    $\texttt{Init}(\texttt{AsyStateRetrieval}, \zeta_{r})$ \texttt{/*\cref{algorithm:asyn_node_retrieval}*/}\\
    $\texttt{Init}(\texttt{AsyCommitStorage}, \zeta_c)$ \\
    $\texttt{Init}(\texttt{HashThread}, 1)$ \texttt{/*\cref{algorithm:pipeline_workflow}*/}\\
    $\texttt{Init}(\texttt{StoreThread}, 1)$  \texttt{/*\cref{algorithm:pipeline_workflow}*/}\\
    $I_{next}\gets 0$, $\mathcal{R}\gets \emptyset$, $\mathcal{W}\gets \emptyset$\\
    $S^*\gets S$\\
    \While{$len(\mathcal{T})\neq 0$}{
        \tcp{\cref{algorithm:batch_fetch}}
        $\mathcal{T}_{batch}\gets 
        \texttt{BatchFetching}(\mathcal{T},\eta)$ \\
        \tcp{\cref{algorithm:batch_parallel_exec}}
        $S^*, W, \mathcal{R}, \mathcal{W}, I_{next}\gets \texttt{BatchParallelExec}(\mathcal{T}_{batch}, S^*, \mathcal{T}, \mathcal{R}, \mathcal{W}, I_{next})$\\
        $\overline{W}\gets \texttt{update\_Merge}(\overline{W}, W)$\\
        $Accs\gets$ the set of distinguished accounts in $\overline{W}$\\
        \ForEach{ $A\in Accs$}{
            \If{\rm $A$ will not be accessed by transactions in $\mathcal{T}$ explicitly}{
                $U_A\gets$ updates to account $A$ in $\mathcal{W}$ \\
                $O\gets \langle A, U_A \rangle$\\
                $Q_{commit}.\texttt{Push}(O)$\\
            }
        }
        $\texttt{AsyCommitAccount}()$ \texttt{/*\cref{algorithm:pipeline_workflow}*/}\\
    }
    $\texttt{wait\_Storage\_Commit}()$\\
    $\texttt{AsyCommitAccount}()$ \texttt{/*\cref{algorithm:pipeline_workflow}*/}\\
    $\texttt{wait\_Account\_Commit}()$\\
    $S\gets S^*$\\
    \Return{$S$}\\
    \vspace{5pt}
    \Procedure{\rm $\texttt{AysCommitStorage}()$}{
        \While{\rm true}{
            $A,U_A\gets Q_{commit}.\texttt{Pop}()$\\
            $Trie_{A}\gets$ the MPT (contract storage) for account $A$\\ 
            $\texttt{write\_Updates}(Trie_A, U_A)$\\
            $root_A\gets Trie_A.\texttt{Commit}()$\\
            $Trie\gets$ the account trie\\
            $\texttt{update\_Account}(Trie, A)$\\
            
        }
    }
\end{algorithm}
\section{Analysis}\label{sec:analysis}

\subsection{Correctness}

To investigate the correctness of the \system's protocol, we show that it meets the deterministic
serializability criteria, given in \cref{theorem:correctness}. Intuitively, our protocol ensures that if a stale value of a state item is read, the execution of that transaction
will eventually be aborted and the incorrect results will be reverted. We first prove the following lemma and then prove the correctness of \system in \cref{theorem:correctness}.

\begin{lemma}\label{lemma:statedb}
    The state database $S$ always reflects the effects of executing transactions $\langle T_0, \ldots, T_{I_{\text{next}} - 1} \rangle$ serially after the parallel execution of each batch of transactions.
\end{lemma}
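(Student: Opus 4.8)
The plan is to prove \Cref{lemma:statedb} by induction on the number of committed batches, tracking the invariant that after each batch completes, the merged state database $S$ (denoted $S^*$ in \Cref{algorithm:framework}) coincides with the state obtained by executing $\langle T_0, \ldots, T_{I_{\text{next}}-1}\rangle$ sequentially in commitment order. The base case is trivial: before any batch executes, $I_{\text{next}} = 0$ and $S$ is the genesis/initial state database, so the claim holds vacuously over the empty prefix.

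For the inductive step, I would assume the invariant holds after batch $k$—so $S$ reflects serial execution of $\langle T_0, \ldots, T_{I_{\text{next}}-1}\rangle$—and examine the merge logic in \Cref{algorithm:batch_parallel_exec} (Lines 9--17). The key observations are threefold. First, every transaction in the batch executes on a light copy of the current $S$, so each reads exactly the values produced by the serial prefix $\langle T_0, \ldots, T_{I_{\text{next}}-1}\rangle$ that the inductive hypothesis guarantees $S$ encodes. Second, the coordinator commits a transaction only when its index $p_i = I_{\text{next}}$ and its read set does not overlap the write sets of any lower-indexed transaction (the abort check), and $I_{\text{next}}$ is advanced strictly by one per commit; this ensures transactions are merged into $S_{merge}$ in exactly the contiguous order $T_{I_{\text{next}}}, T_{I_{\text{next}}+1}, \ldots$ with no gaps. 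Third, I must argue that the committed transaction $T_{I_{\text{next}}}$ read precisely the values it would have read under serial execution: since all lower-indexed transactions $T_0, \ldots, T_{I_{\text{next}}-1}$ are already reflected in $S$ (by hypothesis, for indices below the current batch) or are being committed ahead of it within this batch, and since its read set is verified not to overlap any preceding write set, the snapshot value it consumed equals the serial value. Applying $\texttt{merge\_State}$ to fold $W_{p_i}$ into $S_{merge}$ therefore produces exactly the post-state of serial execution through the new $I_{\text{next}}$.

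The main obstacle I anticipate is rigorously bridging the \emph{explicit} conflict avoidance used during batch fetching (sender/receiver overlap, per \Cref{def:conflict} and \Cref{algorithm:batch_fetch}) with the \emph{implicit} conflicts that the validation phase must still catch. I would need to show that the read/write-set overlap test in \Cref{algorithm:batch_parallel_exec} is a complete detector of conflicts as defined in \Cref{def:conflict}: any transaction $T_j$ that reads a state item written by an earlier committed $T_i$ is correctly aborted rather than silently merged with stale data. This requires arguing that the recorded sets $R_{p_i}$ and $W_{p_i}$ faithfully capture all state accesses (including those arising from internal contract calls) and that sorting by transaction index before the overlap check guarantees the comparison against \emph{all} lower-indexed writes within the batch, not merely a subset. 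A subtle point is that a transaction failing the $p_i = I_{\text{next}}$ test but passing the overlap test is neither committed nor aborted in this batch; I would confirm that such a transaction is simply retained in $\mathcal{R}$/$\mathcal{W}$ for re-evaluation in a later batch, preserving the invariant without double-counting its effects.

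Finally, I would close the argument by noting that the asynchronous storage machinery—direct state reading, asynchronous node retrieval, and the pipelined hash/store workflow—does not alter the logical contents of $S$, only the timing of when MPT nodes are materialized and persisted. Since the lemma concerns the \emph{effects} (the key-value state) rather than the physical trie representation, and since $\mathcal{C}_{state}$ together with $D_{direct}$ always serves the committed values, the correctness of the merge is independent of the storage pipeline. Thus the invariant is preserved across every batch, and by induction it holds after the parallel execution of each batch, establishing \Cref{lemma:statedb}.
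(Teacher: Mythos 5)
Your proposal is correct, and its core mechanism is the same one the paper relies on: commitment strictly in block order via the $p_i = I_{\text{next}}$ gate, plus abort-on-overlap, implies $S$ always encodes the serial prefix $\langle T_0, \ldots, T_{I_{\text{next}}-1}\rangle$. The difference is in form and rigor. The paper's proof is a two-sentence direct appeal to the protocol design (``transactions are committed strictly in the order specified by the block; updates are applied sequentially if no conflict exists''), with no induction, no base case, and no discussion of the edge cases. Your version makes the argument an explicit induction over batches with a stated invariant, and it surfaces precisely the points the paper silently assumes: (i) that the read/write-set overlap test in \cref{algorithm:batch_parallel_exec} is a \emph{complete} detector of the conflicts of \cref{def:conflict}, including implicit ones from internal contract calls that \cref{algorithm:batch_fetch} cannot see; (ii) the status of a transaction that passes validation but fails the $p_i = I_{\text{next}}$ test, which is neither committed nor aborted and whose sets must persist in $\mathcal{R}$/$\mathcal{W}$ so that staleness relative to transactions committed after its snapshot is caught in a later merge loop; and (iii) that the asynchronous storage machinery ($\mathcal{C}_{state}$, $D_{direct}$, the pipelined hash/store phases) changes only when trie nodes are materialized, not the logical key-value state the lemma speaks about. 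What your approach buys is that the lemma actually becomes checkable against the pseudocode; what it costs is that points (i) and (ii) are assumptions about \texttt{parallel\_Execute} and the bookkeeping of $\mathcal{R}$/$\mathcal{W}$ that neither you nor the paper can discharge from the given algorithms alone---you are right to flag them rather than claim them, and your induction correctly reduces the lemma to exactly those two properties.
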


\begin{proof}
    This follows directly from the protocol design. Transactions are committed strictly in the order specified by the block. After each batch execution, the the state updates of each transaction $T_{p_i}$ are applied to the database $S$ sequentially, if $T_{p_i}$ does not conflict with proceeding transactions. This ensures that $S$ consistently reflects the cumulative effects of transactions $\langle T_0, \ldots, T_{I_{\text{next}} - 1} \rangle$.
\end{proof}

\begin{lemma}\label{lemma:correctness}
For each batch $\mathcal{T}_{\text{batch}} = \langle T_{p_i}, \ldots, T_{p_{\eta}} \rangle$, if the execution $\mathcal{E}_{p_i}$ of transaction $T_{p_i}$ reads a state item whose value is stale or not yet finalized in the state database, then $\mathcal{E}_{p_i}$ will eventually be aborted.
\end{lemma}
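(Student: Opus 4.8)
The plan is to reduce the claim to a single conflicting pair of transactions and then verify that the validation phase of \cref{algorithm:batch_parallel_exec} is guaranteed to flag it. First I would make precise what it means for $\mathcal{E}_{p_i}$ to read a stale or not-yet-finalized value of a state item $x$. Every worker executes against a snapshot of $S$ frozen at the start of its batch, so if $\iota$ denotes the value of $I_{\text{next}}$ at that moment, then by \cref{lemma:statedb} the value $\mathcal{E}_{p_i}$ observes for $x$ is exactly the value $x$ carries after the serial prefix $\langle T_0,\ldots,T_{\iota-1}\rangle$. The read is stale or unfinalized precisely when this disagrees with the value $x$ would have after the full prefix $\langle T_0,\ldots,T_{p_i-1}\rangle$, which forces a writer $T_k$ with $\iota\le k<p_i$ and $x\in R_{p_i}\cap W_k$ in the sense of \cref{def:conflict}; the lower bound $\iota\le k$ comes from \cref{lemma:statedb}, since any writer below $\iota$ is already reflected in the snapshot and could not have been missed. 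Hence it suffices to show that this $T_k$ forces $T_{p_i}$ to be aborted.

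Next I would exploit the strict commit discipline carried by $I_{\text{next}}$: a transaction is committed only in the branch guarded by $p_i=I_{\text{next}}$, and $I_{\text{next}}$ advances by exactly one per commit, so $T_{p_i}$ cannot be finalized until every predecessor $T_0,\ldots,T_{p_i-1}$, and in particular $T_k$, has itself been committed. This pins down the decisive instant. In the batch where $T_{p_i}$ first becomes commit-eligible (i.e.\ $I_{\text{next}}$ has advanced to $p_i$), the writer $T_k$ has necessarily already been executed, its write of $x$ has been recorded, and $T_{p_i}$ is still resident in $\mathcal{R}$ because it was neither committed nor aborted earlier. At that point the overlap test fires: since $k<p_i$ and $x\in R_{p_i}\cap W_k$, the read set $R_{p_i}$ overlaps a preceding write set, so the coordinator aborts $\mathcal{T}_{\text{batch}}[i]$ and returns $T_{p_i}$ to $\mathcal{T}$, which is exactly the ``eventual'' abort, with re-execution happening in a later batch.

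The hard part will be a bookkeeping subtlety rather than the conceptual core: once $T_k$ commits, its entry is removed from $\mathcal{W}$ and the per-batch committed set $W$ is reset between batches, so I must guarantee that the fact ``$T_k$ wrote $x$'' is still visible to whichever validation pass ultimately decides $T_{p_i}$'s fate. To close this I would set up an inductive invariant over batches on the tuple $(\mathcal{R},\mathcal{W},W,I_{\text{next}})$: for every transaction still held in $\mathcal{R}$, its recorded read set has been validated against the cumulative effect of all committed predecessors now reflected in $S$, not merely against the uncommitted write sets coexisting in the same batch. Concretely I would split into two exhaustive cases---either $T_{p_i}$ and $T_k$ coexist in $\mathcal{R}$ and $\mathcal{W}$ during a common validation pass, where the overlap is caught directly; or $T_k$ commits strictly earlier, where the staleness of $R_{p_i}$ relative to the advanced snapshot is detected when $T_{p_i}$ reaches commit-eligibility and its recorded reads are revalidated against the current committed state. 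Establishing that one of these two cases must occur before $T_{p_i}$ can ever be finalized is the crux; granting that, the abort is immediate.
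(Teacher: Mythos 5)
Your overall route is the paper's own route made rigorous: the published proof consists, in essence, of your first paragraph alone (a stale read means a missed update written by a preceding uncommitted transaction, and the abort policy of \cref{algorithm:batch_parallel_exec} then aborts the reader), asserted without any of the bookkeeping you correctly identify as the real content. The gap is that the crux you flag at the end is genuine, and the mechanism you invoke for your second case does not exist in the algorithm. \cref{algorithm:batch_parallel_exec} never revalidates a pending transaction's recorded reads ``against the current committed state''; the \emph{only} conflict detection it ever performs is the overlap test of $R_{p_i}$ against the write sets $W_{p_0},\ldots,W_{p_{i-1}}$ of lower-index transactions present in the current validation pass. Since a committed $T_k$ has its write set removed from $\mathcal{W}$ (Line 16), if $T_k$ could commit in a pass in which $T_{p_i}$ is not simultaneously being validated, nothing downstream would ever see the conflict, and your argument would fail exactly at the point where you write ``granting that.''

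The correct way to close the crux is to prove your case 2 \emph{vacuous} rather than handled. Because $\mathcal{E}_{p_i}$'s snapshot missed $T_k$'s write, $I_{\text{next}} \le k$ at the start of the batch in which $T_{p_i}$ executed, so $T_k$ was uncommitted at that moment; $R_{p_i}$ is appended to $\mathcal{R}$ before that batch's validation pass begins, and since commits advance $I_{\text{next}}$ strictly in index order, $T_{p_i}$ (with $p_i > k$) cannot leave $\mathcal{R}$ by committing before $T_k$ commits. Hence in the validation pass in which $T_k$ commits, either $T_{p_i}$ has already been aborted (the desired conclusion) or $R_{p_i}$ and $W_k$ coexist in that pass's sorted lists with $k < p_i$, and the overlap test at position $i$ fires. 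One residual caveat is worth making explicit: within that same pass, $T_k$'s commit removes $W_k$ from $\mathcal{W}$ \emph{before} position $i$ is examined, so the test ``$R_{p_i}$ overlaps with $W_{p_0},\ldots,W_{p_{i-1}}$'' must be read positionally over the pass-local sorted list (as the paper's notation suggests) rather than over the live set $\mathcal{W}$; under the latter reading the lemma would be false for same-pass commits. With this substitution for your granted crux, your proof is complete and strictly more careful than the paper's own.
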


\begin{proof}
If the execution $\mathcal{E}_{p_i}$ reads a stale value, it must have missed an update present in the write set $\mathcal{W}$ produced by a preceding transaction that has not yet been committed. According to the abort policy of batch-parallel execution (see \cref{subsec:parallel_exec}), such an execution $\mathcal{E}_{p_i}$ must be aborted to maintain consistency and will be re-executed in a subsequent batch. Hence, the lemma holds for every transaction in the batch.
\end{proof}

\begin{lemma}\label{lemma:next}
    For the parallel execution of each batch $\mathcal{T}_{\text{batch}} = \langle T_{p_i}, \ldots, T_{p_{\eta}} \rangle$, before merging the states produced by all threads, the value of $I_{\text{next}}$ must be strictly equal to the smallest transaction index in the set $\mathcal{R}$ or $\mathcal{W}$.
\end{lemma}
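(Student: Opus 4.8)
The plan is to prove the statement as an inductive invariant maintained across successive batches, rather than reasoning about a single batch in isolation. Concretely, I will show that at the moment control reaches the merge loop of \cref{algorithm:batch_parallel_exec} (i.e.\ immediately after the read/write sets are appended and sorted, but before any state is merged), two properties hold with $k := I_{\text{next}}$: (a) no transaction with index strictly less than $k$ has an entry in $\mathcal{R}$ or $\mathcal{W}$, and (b) the transaction $T_k$ does have its read/write set in $\mathcal{R}$ and $\mathcal{W}$. Together these two facts force the smallest index present in $\mathcal{R}$ (equivalently $\mathcal{W}$, since every executed transaction contributes to both) to be exactly $k = I_{\text{next}}$, which is the claim.

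Property (a) is essentially a restatement of the commit discipline already captured by \cref{lemma:statedb}. A transaction is committed only in the branch where its index equals $I_{\text{next}}$, and at that point its entry is deleted from $\mathcal{R}$ and $\mathcal{W}$ while $I_{\text{next}}$ is incremented. Consequently the set of committed transactions is always exactly $\{T_0,\ldots,T_{k-1}\}$, and none of these indices can reappear in $\mathcal{R}$ or $\mathcal{W}$; every remaining entry therefore carries an index $\ge k$. I would spell out that aborted transactions are returned to $\mathcal{T}$ and their now-stale entries are discarded from $\mathcal{R}$/$\mathcal{W}$, so that no index below $k$ survives through the abort path either.

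Property (b) is the crux, and I would establish it by a case split on where $T_k$ currently resides. If $T_k$ was executed successfully in an earlier batch but could not yet commit (because at that time $I_{\text{next}}<k$), its entry was neither removed nor aborted and hence is still carried in $\mathcal{R}$/$\mathcal{W}$; appending and sorting cannot delete it. Otherwise $T_k$ is still in the remaining set $\mathcal{T}$; by property (a) all indices below $k$ are already committed and absent from $\mathcal{T}$, so $T_k$ is the smallest-index element of $\mathcal{T}$, i.e.\ $\mathcal{T}[0]$. The batch-fetching routine of \cref{algorithm:batch_fetch} always admits $\mathcal{T}[0]$ in its first iteration (the conflict test is vacuous against the empty batch), so $T_k$ is placed in $\mathcal{T}_{\text{batch}}$, executed, and its read/write set appended before the merge loop is entered. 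In either case $T_k$ is present, and combined with (a) this pins the minimum index at $k$.

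The main obstacle I anticipate is justifying property (b) rigorously, because it couples the batch-fetching rule, the ordering of $\mathcal{T}$, and the cross-batch carry-over of executed-but-uncommitted transactions. I would need the auxiliary observation that $T_k$, being the smallest-index executed transaction by (a), sits at position $0$ after the sort, so the overlap test against $W_{p_0},\ldots,W_{p_{i-1}}$ is empty for it; hence $T_k$ is never aborted and always advances $I_{\text{next}}$, which is what keeps the induction closing from one batch to the next. The proof also rests on the implicit assumption that $\mathcal{T}$ preserves block order, so that $\mathcal{T}[0]$ really is the minimum-index pending transaction; I would state this assumption explicitly, as \cref{algorithm:batch_fetch} relies on it when it selects ``the first transaction from $\mathcal{T}$''.
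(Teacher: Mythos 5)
Your proof is correct, and it takes a genuinely different—and more complete—route than the paper's. The paper's own proof argues top-down from what correct merging requires: it notes that all transactions with indices below $I_{\text{next}}$ have been fully resolved and removed (your property (a)), and then asserts that $I_{\text{next}}$ ``must be set to'' the smallest index in $\mathcal{R}$ or $\mathcal{W}$—essentially restating the lemma as a necessity claim rather than deriving the equality from the algorithm. What the paper never establishes is your property (b): that an entry for $T_{I_{\text{next}}}$ is actually \emph{present} in $\mathcal{R}$/$\mathcal{W}$ at merge time; without it, $I_{\text{next}}$ could be strictly smaller than the minimum index and the stated equality would fail. Your case split—either $T_{I_{\text{next}}}$ was executed in an earlier batch and carried over uncommitted, or it is the minimum-index element of $\mathcal{T}$ and is therefore admitted by the first (vacuous-conflict) call to \texttt{NextTx} in \cref{algorithm:batch_fetch}—supplies exactly this missing half, and your closing observation that the minimum-index entry sits at sorted position $0$, faces an empty overlap test, and hence commits is essentially what the paper defers to \cref{lemma:one_tx}; using it to close the induction across batches is legitimate and makes the invariant self-sustaining. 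The price of this rigor is two assumptions the paper leaves implicit, both of which you correctly flag: as written, Lines 10--12 of \cref{algorithm:batch_parallel_exec} return an aborted transaction to $\mathcal{T}$ but never delete its stale entries from $\mathcal{R}$/$\mathcal{W}$, and aborted transactions are re-inserted via set union with no stated ordering. The first is not merely cosmetic—if a stale entry survives, a re-executed transaction can leave a duplicate whose index eventually falls below $I_{\text{next}}$, breaking your property (a) under a literal reading of the pseudocode—so the cleanup-on-abort you propose is in fact \emph{necessary} for the lemma to hold and should be stated as an amendment to the algorithm rather than as a reading of it. In short: your argument proves the lemma; the paper's proof sketches only half of it.
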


\begin{proof}
    The index $I_{\text{next}}$ indicates the position of the next transaction to be committed. To ensure the correctness of state merging, all transactions with indices less than $I_{\text{next}}$ must have their read and write operations fully resolved and their effects finalized. If there exists a read or write operation with an index smaller than $I_{\text{next}}$, merging states prematurely could violate the consistency of the execution order. Therefore, $I_{\text{next}}$ must be set to the smallest index present in either $\mathcal{R}$ or $\mathcal{W}$ before state merging can safely proceed.
\end{proof}

\begin{lemma}\label{lemma:one_tx}
    For the execution of each batch $\mathcal{T}_{\text{batch}} = \langle T_{p_i}, \ldots, T_{p_{\eta}} \rangle$, at least one additional transaction is guaranteed to be successfully committed.
\end{lemma}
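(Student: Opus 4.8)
The plan is to exhibit, in every batch, a single transaction that is guaranteed to pass the merge phase of \cref{algorithm:batch_parallel_exec}, namely the transaction whose index equals the current value of $I_{\text{next}}$. The strategy is to show that this transaction occupies the head of the sorted merge loop, that it cannot be aborted there, and that it satisfies the commit condition $p_i = I_{\text{next}}$; committing it then strictly advances $I_{\text{next}}$, which both proves the lemma and yields termination of the outer loop in \cref{algorithm:framework}.

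First I would note that whenever $\mathcal{T}\neq\emptyset$, \cref{algorithm:batch_fetch} returns a nonempty batch, so after the parallel execution phase the read-set collection $\mathcal{R}$ is nonempty. Invoking \cref{lemma:next}, the smallest transaction index appearing in $\mathcal{R}$ (equivalently in $\mathcal{W}$) is exactly $I_{\text{next}}$; denote the corresponding execution $\mathcal{E}_{I_{\text{next}}}$. Because the merge loop processes entries in ascending index order after the sort on Line~7, $\mathcal{E}_{I_{\text{next}}}$ is examined first. I would then rule out abortion: the abort test compares $R_{I_{\text{next}}}$ against the write sets $W_{p_0},\ldots,W_{p_{i-1}}$ of strictly smaller index, and since $I_{\text{next}}$ is the minimum index in $\mathcal{R}$, this collection of preceding write sets is empty, so the overlap check fails vacuously and $\mathcal{E}_{I_{\text{next}}}$ is not aborted. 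The commit branch then fires because its index equals $I_{\text{next}}$ by construction, its writes are merged into $S_{\text{merge}}$, and $I_{\text{next}}$ is incremented. Hence $T_{I_{\text{next}}}$ is committed, giving the claimed guarantee that at least one additional transaction is committed per batch.

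The main obstacle I anticipate is arguing that committing the head-of-queue transaction is genuinely safe rather than merely syntactically triggered: I must exclude the possibility that $T_{I_{\text{next}}}$ is a stale execution that \cref{lemma:correctness} would require to be aborted. I would close this gap by combining \cref{lemma:statedb}, which asserts that the merged state reflects exactly the committed prefix $\langle T_0,\ldots,T_{I_{\text{next}}-1}\rangle$, with the observation that a minimum-index transaction has no earlier uncommitted writer remaining in $\mathcal{W}$; the contrapositive of \cref{lemma:correctness} then certifies that its read did not miss any pending update, so the commit is legitimate.

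A secondary point worth verifying is that transactions carried over from earlier batches—executed successfully but not yet committed because their index exceeded the then-current $I_{\text{next}}$—do not perturb the argument. This again follows from \cref{lemma:next}, which pins the minimum index present in $\mathcal{R}$ and $\mathcal{W}$ to $I_{\text{next}}$ regardless of how entries accumulated across batches, so the head of the merge loop is always the unique transaction $T_{I_{\text{next}}}$ that the above reasoning commits.
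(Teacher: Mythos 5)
Your proposal is correct and takes essentially the same route as the paper's proof: both invoke \cref{lemma:next} to identify $T_{I_{\text{next}}}$ as the minimum-index entry heading the sorted merge loop and then observe that the commit branch in Lines 13--17 of \cref{algorithm:batch_parallel_exec} necessarily fires for it. If anything, your argument is more careful than the paper's terse ``by construction'' step, since you make the vacuously-passing abort check explicit and close the staleness and carried-over-execution gaps via \cref{lemma:statedb} and the contrapositive of \cref{lemma:correctness}.
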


\begin{proof}
    During the parallel execution of the batch, before merging states, the value of $I_{\text{next}}$ must be equal to the smallest transaction index in the global read set $\mathcal{R}$, as established in \cref{lemma:next}. By construction, the transaction $T_{I_{\text{next}}}$ does not conflict with any other uncommitted transactions and can therefore be safely committed. This is guaranteed by the logic in Lines 13–17 of \cref{algorithm:batch_parallel_exec}, which ensures that such a transaction is committed first. Thus, at least one transaction in the batch will always be successfully committed.
\end{proof}

\begin{theorem}\label{theorem:correctness}
Given a block $B_l$ containing a sequence of transactions $\mathcal{T} = \langle T_1, \ldots, T_m \rangle$, the schedule produced by the parallel execution of \system yields the same final state as that produced by a serial execution of the transactions in order.
\end{theorem}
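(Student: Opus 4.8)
The plan is to establish deterministic serializability by induction on the number of committed transactions, tracked by the index $I_{\text{next}}$, fusing the four preceding lemmas into a single loop invariant. The invariant I would maintain is precisely the statement of \cref{lemma:statedb}: after each batch completes, the state database $S$ reflects the serial execution of $\langle T_0, \ldots, T_{I_{\text{next}}-1}\rangle$. The base case is immediate, since before any batch runs we have $I_{\text{next}} = 0$ and $S$ equals the initial pre-block state, which trivially agrees with the empty serial prefix.

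For the inductive step, I would show that each batch advances $I_{\text{next}}$ while preserving the invariant. By \cref{lemma:next}, at the moment of merging, $I_{\text{next}}$ equals the smallest transaction index appearing in $\mathcal{R}$ (equivalently $\mathcal{W}$), so transaction $T_{I_{\text{next}}}$ has no uncommitted predecessor within the batch. Consequently its read set cannot overlap any preceding write set, so by the abort policy it is never aborted; by \cref{lemma:one_tx} it is committed, and its reads were taken from a snapshot of $S$ that, by the induction hypothesis, already reflects the prefix $\langle T_0, \ldots, T_{I_{\text{next}}-1}\rangle$. Hence $T_{I_{\text{next}}}$ reads exactly the values it would read under serial execution, and merging its writes into $S$ (Lines 13--17 of \cref{algorithm:batch_parallel_exec}) extends the prefix by one, restoring the invariant with $I_{\text{next}}$ incremented. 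Any further transactions committed in the same batch are handled identically: each is committed only when its index equals the current $I_{\text{next}}$ and its read set does not overlap the writes of already-merged predecessors, so the argument recurses within the batch.

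Termination and progress follow from \cref{lemma:one_tx}: every batch commits at least one transaction, so $I_{\text{next}}$ strictly increases across batches and reaches $m$ after at most $m$ batches, at which point $\mathcal{T}$ is empty and the main loop of \cref{algorithm:framework} exits. Crucially, \cref{lemma:correctness} guarantees that no transaction reading a stale or not-yet-finalized value is ever committed---such executions are aborted and returned to $\mathcal{T}$ for re-execution in a later batch---so every commit is consistent with the serial order. Applying the invariant one final time when $I_{\text{next}} = m$ yields that $S$ reflects the serial execution of $\langle T_0, \ldots, T_{m-1}\rangle$, which is exactly the claim.

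The main obstacle I anticipate is rigorously justifying that a committed transaction's recorded read set truly coincides with what serial execution would produce, rather than merely being free of \emph{detected} conflicts. This requires connecting the snapshot semantics (each transaction executes on a light copy of $S$) with the in-order merge: one must verify that the abort check against $W_{p_0}, \ldots, W_{p_{i-1}}$ captures every way a transaction could have observed a value inconsistent with the serial prefix, so that passing the check is equivalent to having read the correct committed state. I would make this precise by arguing that any discrepancy between a transaction's snapshot reads and its serial reads must manifest as an overlap with an earlier committed write, which the conflict detection in \cref{algorithm:batch_parallel_exec} is designed to catch.
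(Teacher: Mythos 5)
Your proposal is correct and takes essentially the same route as the paper's own proof: it assembles the same four lemmas---\cref{lemma:statedb} for the final-state equivalence, \cref{lemma:correctness} for safety against stale reads, and \cref{lemma:next} together with \cref{lemma:one_tx} for progress and termination---merely recast as an explicit induction on $I_{\text{next}}$ with \cref{lemma:statedb} as the loop invariant. Your closing caveat, that one must still verify the overlap check against prior write sets captures every deviation of snapshot reads from serial reads, is a fair refinement of a step the paper's proof glosses over, but it does not change the approach.
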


\begin{proof}

Let $\mathcal{T} = \langle T_1, \ldots, T_m \rangle$ be the sequence of transactions in block $B_l$, and let the parallel execution proceed in batches, with transactions being committed incrementally.
From \cref{lemma:correctness}, any transaction that reads a stale or unmerged state will be aborted and deferred for re-execution in a subsequent batch. Thus, only transactions that observe a consistent and up-to-date view of the state are allowed to commit.
According to \cref{lemma:one_tx}, each batch execution guarantees that at least one transaction is successfully committed. Therefore, all transactions in $\mathcal{T}$ will eventually be committed within a finite number of batches.
Finally, once all transactions have been committed, \cref{lemma:statedb} ensures that the final state stored in the state database is equivalent to the result of executing all transactions serially in the order $\langle T_1, \ldots, T_m \rangle$.
\end{proof}

\subsection{Commit Point Determination}\label{subsec:commit_point}
In this subsection, we discuss the method for determining the commit point for each node in the account trie. 

Let $n_r$ denote the number of remaining transactions to be executed. On average, each transaction updates $\mu$ accounts. Given that the account trie contains a substantial number of accounts (e.g., over one million), we assume that the first four levels (0 to 4) of the trie are fully populated. This assumption simplifies the analysis of commit points.
The commit point of a node $N$, denoted as $L$, is the point at which no further transaction in the block will modify $N$. At this stage, committing $N$ to the hashing phase is safe. However, predicting $L$ precisely is infeasible, so we estimate it as $L^*$. 

For a node $N$ at level $r$ ($\leq 4$), let $X^r$ denote the number of modifications to $N$ by the remaining $m$ transactions, amounting to an expected $u = \mu m$ state updates. The probability that a given state update modifies $N$ is $16^{-r}$, and thus, the probability that $N$ remains unmodified is:
\begin{align*}
    \small
    Pr[X^r=0|u] &=\left(1-16^{-r}\right)^{u}\\
    &=\left(1-16^{-r}\right)^{-16^r u (-16^{-r})} \approx e^{- \frac{u}{16^r}}  
\end{align*}
This probability increases as fewer updates remain. When $Pr[X^r=0|u]$ exceeds a predefined threshold $\alpha$ (e.g., 0.9), the likelihood of $N$ being modified again is sufficiently low, making it advantageous to hash $N$ asynchronously.

To determine a practical commit point, we analyze a specific case. When $r=4$ and $u \leq 4000$, the probability of no further modifications, $Pr[X^4=0|4000]$, is approximately 93.9\%, which is sufficiently high. Since the probability increases for $r \geq 4$ due to a larger number of nodes at deeper levels, we establish the following bound:
\begin{equation*}\label{equ:commit_p}
    \small
    Pr[X^r=0|u] \geq Pr[X^4=0|u] \geq Pr[X^4=0|4000] \approx 93.3\%
\end{equation*}
For nodes at level $r \geq 4$, we set the commit point $L^*$ to $4000$, the total number of transactions in the block. When $r \leq 3$, $L^*$ is estimated as follows:

\begin{equation}\label{equ:commit_point}
\small
\begin{aligned}
    Pr[X^r=0|u]  \geq \alpha  &\Rightarrow  e^{-\frac{u}{16^r}}\leq \alpha  \Rightarrow u \leq - 16^r \ln{\alpha} \\
    &\Rightarrow  m \leq \boxed{- \left( 16^r \ln{\alpha}\right) /\mu=L^*}
\end{aligned}
\end{equation}

According to \cref{equ:commit_point}, the commit points for levels $r \leq 1$ are close to zero, so we directly set them to 0. The final commit point determination is summarized as:
\begin{equation*}
\small
L^* =
\begin{cases}
    0 & \text{if } r\leq 1\\
   - \left(16^r \ln{\alpha}\right)/\mu & \text{if } 2\leq r\leq 3 \\
   4000 & \text{if } r\geq 4
\end{cases}
\end{equation*}

If node $N$ is a leaf node in the account trie, it stores the account state of an account $A$. If $A$ is still explicitly accessed by remaining transactions, $N$ cannot be hashed, even if it has reached its commit point. This is because $N$ must be modified again when $A$ is updated.

\subsection{Recovery}\label{subsec:recovery}
To support direct state reading, \system records state values in the database $D_{direct}$ in the form $\langle A||\kappa, v \rangle$ during the store phase of each state update. However, if a machine node crashes, only a subset of the state updates may have been inserted into $D_{direct}$, leading to potential inconsistencies. When the machine node recovers, it must replay the blocks to restore the correct state. In such cases, execution may read inconsistent state values from $D_{direct}$, compromising correctness.
To address this issue, we extend the record format to $\langle A||\kappa, v, l \rangle$, where $l$ represents the block height at which the state update occurred. When replaying from a specific block $B_l$, all state values with height greater than $l$ are discarded. This ensures that execution retrieves the correct state values from the account trie and storage trie, maintaining consistency across state updates.
By incorporating block height information, this approach guarantees that state values remain consistent after recovery, preventing partial writes from causing execution errors. Additionally, it enables efficient state restoration without requiring a full resynchronization of the database.
\section{Conclusion}  
\label{sec:conclusion}  

In this paper, we proposed \system, a batch-based parallel transaction execution framework that optimizes both computation and storage efficiency in blockchain systems. By integrating direct state reading, asynchronous parallel state loading, and a pipelined workflow, \system enables efficient parallel execution while mitigating the storage bottlenecks that limit existing approaches. These optimizations ensure scalable transaction processing while maintaining deterministic serializability. In future work, we plan to explore adaptive transaction scheduling and enhanced state caching mechanisms to further improve blockchain scalability.
{
\small
\bibliographystyle{plain}
\bibliography{ref}
}

\end{document}